


\documentclass[11pt]{amsart} 
\usepackage{a4wide}
\usepackage{amsmath,amssymb,graphicx}
\usepackage{mathrsfs}
 \usepackage{mathtools}
 \usepackage{float}
\usepackage{dsfont}
\usepackage{comment}
\usepackage[T1]{fontenc}
\usepackage[utf8]{inputenc} 
\usepackage{enumitem}
\usepackage{subcaption}
\usepackage{hyperref}
\usepackage{tikz}
\usetikzlibrary{decorations.pathreplacing,decorations.markings,positioning,arrows.meta}
\usetikzlibrary{matrix}
\usepackage{stmaryrd}
\usepackage{xcolor}
\def\di{\displaystyle}
\def\N{\mathbb{N}}
\def\R{\mathbb{R}}

\def\di{\displaystyle}
\def\T{\mathbf{T}}

\newtheorem{theorem}{Theorem}[section]
\newtheorem{lemma}[theorem]{Lemma}
\newtheorem{definition}[theorem]{Definition}

\usepackage{graphicx} 

\newcommand{\ack}{\section*{Acknowledgements}}



\begin{document}


\title{From fractal R-L ladder networks to the diffusion equation}

\author{Jacky Cresson$^1$, Anna Szafra\'{n}ska$^2$}

\begin{abstract}
We give a self-contained presentation of fractal R-L ladder networks as well as a detailed computation of the admittance of these systems. We also discuss the conditions under which such systems display a fractional behavior. Finally, we give a full discussion of the connection existing between fractal R-L network and the diffusion equation. 
\end{abstract}


\maketitle

$^1$ Laboratoire de mathématiques et leurs applications, UMR CNRS 5142, Université de Pau et des Pays de l'Adour-E2S, France

$^2$  Institute of Applied Mathematics, Gdańsk University of Technology, G. Narutowicz Street 11/12, 80-233 Gdańsk, Poland
\tableofcontents

\section{Introduction}
\label{sec1}

We are interested in a dynamical system $S$ which is modeled by a functional relationship between system input $e(t)$ and system output $u(t)$.\\
\begin{figure}[ht]
  \centering
   \includegraphics[width = 0.5\textwidth]{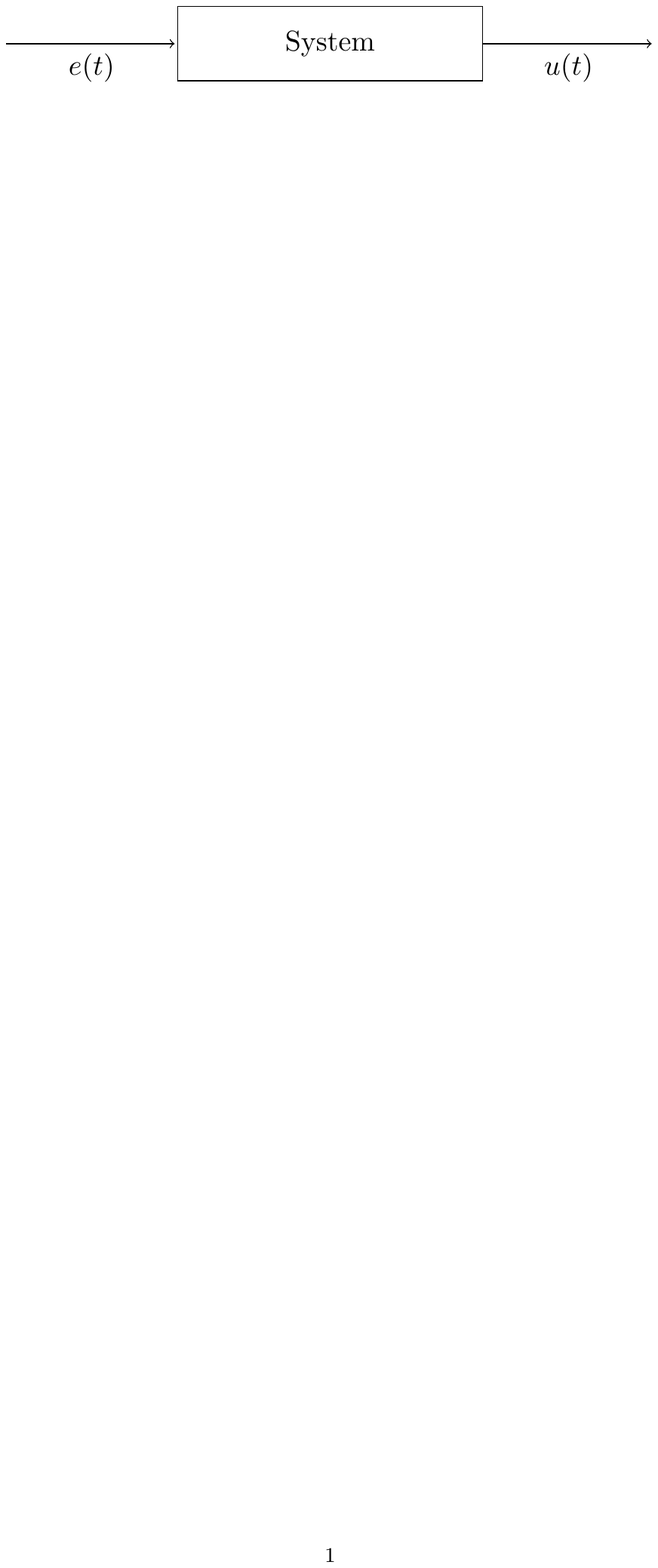}
\end{figure}\\
Let $E$ and $U$ denote the image under the Fourier transform $\mathscr{F}$ of the functions $e$ and $u$, respectively, i.e. 
\begin{equation} \label{laplace}
E(\omega) = \mathscr{F}[e] (\omega),\;\;\;\; U(\omega) = \mathscr{F}[u](\omega).
\end{equation}
The system $S$ can then be described using a {\bf transfer function} $H$ connecting the functions $E$ and $U$, precisely
\begin{equation} \label{transfer}
U(\omega) = H(\omega)E(\omega).
\end{equation}
In this article, we study a special class of electronic systems called {\bf ladder networks}. Ladder networks have been used in a variety of situations like modeling of electric machines \cite{riu}, respiratory system \cite{clara}, hydrogen storage \cite{sabatier}, etc. Many examples can be found in the book of A. Oustaloup \cite{oustaloup}. \\

We are interested in ladder-network which possess a {\bf fractional behavior}, meaning that the transfer function $H$ behaves as 
\begin{equation} \label{transferBehaviour}
H(\omega ) \sim_{\omega \rightarrow +\infty} \omega^{\nu} ,
\end{equation}
for a given constant $\nu>0$.\\

In this paper, we focus on ladder-network of the form presented on Figure \ref{RL},\\
\begin{center}
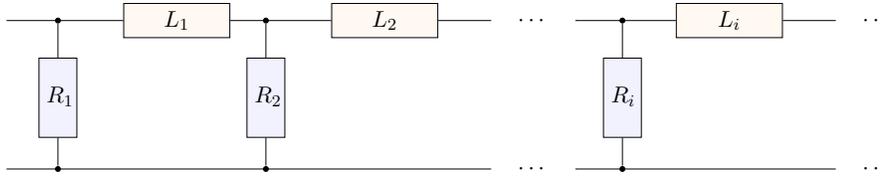
\begin{figure}[ht]
    \centering
    \resizebox{0.8\linewidth}{!}{	\begin{tikzpicture}
		[-,auto,thin,
			boxR/.style={rectangle,text width=1em,fill=blue!5,draw,align=center},
			boxL/.style={rectangle,text width=4em,fill=orange!5,draw,align=center},
			mid arrow/.style={draw, postaction={decorate},
				decoration={
					markings, mark=at position 0.75 with {\arrow[scale=2]{>}}}},
			end arrow/.style={draw, postaction={decorate},
				decoration={
					markings, mark=at position 1 with {\arrow[scale=2]{>}}}}	
			]
		\node[] (empty00) {};
		\node[circle,fill,inner sep=1pt] (empty0R1) [right of = empty00,node distance = 1cm] {};
		\node[boxR] (R1) [below of = empty0R1,node distance = 1.3cm] {\phantom{x}\\$R_1$\\ \phantom{x}};
		\node[boxL] (L1) [right of = empty0R1,node distance = 2cm] {$L_1$};	
		\node[circle,fill,inner sep=1pt] (empty0R2) [right of = L1,node distance = 1.5cm,text width=0.001] {};
		\node[boxR] (R2) [below of = empty0R2,node distance = 1.3cm] {\phantom{x}\\$R_2$\\ \phantom{x}}; 
		\node[boxL] (L2) [right of = empty0R2,node distance = 2cm] {$L_2$};	
		\node[] (dot00) [right of = L2,align = center,node distance = 2.5cm,text width=3em] {$\cdots$};
		\node[circle,fill,inner sep=1pt] (empty0Ri) [right of = dot00,node distance = 1.5cm] {};
		\node[boxL] (Li) [right of = empty0Ri ,node distance = 1.8cm] {$L_i$};
		\node[boxR] (Ri) [below of = empty0Ri,node distance = 1.3cm] {\phantom{x}\\$R_i$\\ \phantom{x}}; 
		\node[] (dot01) [right of = Li,align=center,node distance = 2.5cm,text width=3em] {$\cdots$};
		
		\node[] (empty10) [below of = empty00,node distance = 2.5cm] {};
		\node[circle,fill,inner sep=1pt] (empty1R1) [below of = empty0R1,node distance = 2.5cm] {};
		\node[circle,fill,inner sep=1pt] (empty1R2) [below of = empty0R2,node distance = 2.5cm] {};
		\node[] (dot11) [below of = dot00,node distance = 2.5cm,align=center,text width=3em] {$\cdots$};
		\node[circle,fill,inner sep=1pt] (empty1Ri) [below of = empty0Ri,node distance = 2.5cm] {};
		\node[] (dot12) [below of = dot01,align=center,node distance = 2.5cm,text width=3em] {$\cdots$};
		
					
		\draw[] (empty00) -- (L1) -- (L2) -- (dot00) -- (Li) -- (dot01);
		\draw[] (empty0R1) -- (R1) -- (empty1R1);
		\draw[] (empty0R2) -- (R2) -- (empty1R2);
		\draw[] (empty0Ri) -- (Ri) -- (empty1Ri);
		\draw[] (empty10) -- (dot11) -- (dot12);

	\end{tikzpicture}}
    \caption{R-L ladder-network} \label{RL}
\end{figure}
\end{center}
where $R_i$ and $L_i$ are resistance and inertance respectively. Imposing scaling relations on the resistance and inertance, we obtain {\bf fractal R-L ladder networks}. They were defined by A. Oustaloup \cite{oustaloup} (see also \cite{riu},p.38, figure III.10).\\

We have not find a complete derivation of the expression given in (\cite{oustaloup}, \cite{riu},equation (III-38)) for the transfer function nor of the functional relation satisfied by this function.\\

In this paper, we give a self-contained characterization of the transfer function associated to R-L ladder network and how this function specialise in the case of fractal R-L ladder networks.

We study under which conditions such fractal R-L ladder networks produce fractional behavior. In particular, we give complete proofs for results presented in (\cite{oustaloup},\cite{riu}, III.42) as well as details about the connection with the diffusion equation. Our presentation follows the general strategy proposed by J. Sabatier and al. in \cite{sabatier} interpreting some electronic devices as discretization of some partial differential equations and in particular as particular diffusion equations. \\

The paper is organised as follows:\\

In Section \ref{sec2}, we introduce the mathematical framework about trees and forest allowing us to encode the structure of the electronic diagram describing the R-L ladder networks. We derive in Section \ref{sectionrecursive}, the explicit expression of the transfer function for a R-L ladder network which makes use of continued fraction expansions. 

Section \ref{sectioncomputation} specialise the previous result in the case of fractal R-L ladder networks. In particular, we prove that the transfer function satisfies a functional relation.

In Section \ref{sectionfractionalbehavior}, we discuss under which conditions a transfer function can exhibit a fractional behavior.

Finally Section \ref{sectiondiffusion} describes in full details the connection between the diffusion equation and fractal R-L ladder networks.

We finish by some perspectives of this work.


\section{Electronic diagrams and decorated forest}
\label{sec2}

Every {\bf electronic diagram} is constructed using two basic configurations connecting electronic elements called in {\bf series} and {\bf parallel} and classically represented as follows \begin{itemize}
    \item in series
    \begin{figure}[ht]
    \centering
    \resizebox{0.3\linewidth}{!}{	\begin{tikzpicture}
		[-,auto,thin,
			boxR/.style={rectangle,text width=3em,fill=blue!5,draw,align=center},
			boxC/.style={rectangle,text width=1em,fill=orange!5,draw,align=center},
			mid arrow/.style={draw, postaction={decorate},
				decoration={
					markings, mark=at position 0.75 with {\arrow[scale=2]{>}}}},
			end arrow/.style={draw, postaction={decorate},
				decoration={
					markings, mark=at position 1 with {\arrow[scale=2]{>}}}}	
			]

		\node[] (empty0) {};
		\node[boxR] (S1) [right of = empty0,node distance = 1.5cm] {\phantom{b}\hspace{-0.5em}$a$};
		\node[boxR] (S2) [right of = S1,node distance = 2.2cm] {$b$};
		\node[] (empty1) [right of = S2,node distance = 1.5cm]{};

		\draw[] (empty0) -- (S1) node[above=1mm,text width=7em] {};
		\draw[] (S1) -- (S2) -- (empty1) {};
		
	\end{tikzpicture}}
\end{figure}
\item in parallel 
    \begin{figure}[ht]
    \centering
    \resizebox{0.23\linewidth}{!}{	\begin{tikzpicture}
		[-,auto,thin,
			boxR/.style={rectangle,text width=3em,fill=blue!5,draw,align=center},
			boxC/.style={rectangle,text width=1em,fill=orange!5,draw,align=center},
			connection/.style={inner sep=0,outer sep=0},
			mid arrow/.style={draw, postaction={decorate},
				decoration={
					markings, mark=at position 0.75 with {\arrow[scale=2]{>}}}},
			end arrow/.style={draw, postaction={decorate},
				decoration={
					markings, mark=at position 1 with {\arrow[scale=2]{>}}}}	
			]

		\node[connection] (0) [text width = 0.0001em] {};
		\node[connection] (1) [right of = 0,node distance = 0.8cm,text width = 0.0001em] {};
		\node[connection] (11) [above of = 1, node distance = 0.5cm] {};
		\node[connection] (12) [below of = 1, node distance = 0.5cm]{};
		
		\node[boxR] (S1) [right of = 11,node distance = 1.2cm] {\phantom{b}\hspace{-0.5em}$a$};
		\node[boxR] (S2) [right of = 12,node distance = 1.2cm] {$b$};
		
		\node[connection] (31) [right of = S1, node distance = 1.2cm]{};
		\node[connection] (32) [right of = S2, node distance = 1.2cm]{};
		\node[connection] (3) [below of = 31, node distance = 0.5cm]{};
		\node[connection] (4) [right of = 3, node distance = 0.8cm]{};

					
		\draw (0) -- (1) ;
		\draw (11) -- (12) -- (S2) -- (32);		
		\draw (11) -- (S1) -- (31) -- (32);
		\draw (3) -- (4) {};

	\end{tikzpicture}}
\end{figure}
\end{itemize}
where $a$ and $b$ are two electronic components.\\

The idea to represent electronic circuits as {\bf graphs} goes back to G. Kirchhoff in 1847 \cite{kirchhoff}. We use the following graphs for the series and parallel diagrams respectively:

$$
\begin{tikzpicture}
\node[circle,fill,inner sep=2pt] (tree1) [node distance = 2cm]{};
		\node[] (texta) [below of = tree1,node distance = 0.45cm]{$a$};
		\node[circle,fill,inner sep=2pt] (tree2) [right of = tree1,node distance = 1cm]{};
		\node[] (textb) [below of = tree2,node distance = 0.4cm]{$b$};
		\draw[] (tree1) -- (tree2);	
\end{tikzpicture}
\qquad \ \mbox{\rm and}\qquad \ 
\begin{tikzpicture}
\node[circle,fill,inner sep=2pt] (tree1) [node distance = 2cm]{};
		\node[] (texta) [below of = tree1,node distance = 0.45cm]{$a$};
		\node[circle,fill,inner sep=2pt] (tree2) [right of = tree1,node distance = 1cm]{};
		\node[] (textb) [below of = tree2,node distance = 0.4cm]{$b$};
\end{tikzpicture}  
$$

An alternatively more algebraic way is to encode the previous graphs as $ab$ when $a$ and $b$ are in series and $a\otimes b$ when $a$ and $b$ are in parallel.\\

As an example, we consider the following family of {\bf recursive diagrams} denoted by $D_n$, $n\geq 1$, represented as it is shown in Figure \ref{DDD}.
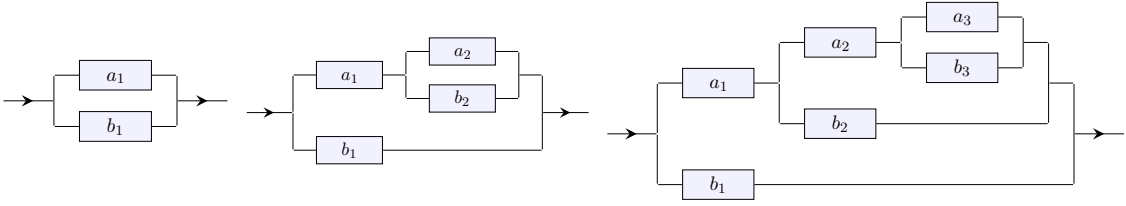
\begin{figure}[!ht]
\centering
\phantom{xx}
\begin{minipage}{0.2\textwidth}
    \centering
    \resizebox{1\linewidth}{!}{	\begin{tikzpicture}
		[-,auto,thin,
			boxR/.style={rectangle,text width=3em,fill=blue!5,draw,align=center},
			boxC/.style={rectangle,text width=1em,fill=orange!5,draw,align=center},
			connection/.style={inner sep=0,outer sep=0},
			mid arrow/.style={draw, postaction={decorate},
				decoration={
					markings, mark=at position 0.6 with {\arrow[scale=2]{>}}}},
			end arrow/.style={draw, postaction={decorate},
				decoration={
					markings, mark=at position 1 with {\arrow[scale=2]{>}}}}	
			]

		\node[connection] (0) [text width = 0.0001em] {};
		\node[connection] (1) [right of = 0,node distance = 1cm,text width = 0.0001em] {};
		\node[connection] (11) [above of = 1, node distance = 0.5cm] {};
		\node[connection] (12) [below of = 1, node distance = 0.5cm]{};
		
		\node[boxR] (S1) [right of = 11,node distance = 1.2cm] {\phantom{b}\hspace{-0.5em}$a_1$};
		\node[boxR] (S2) [right of = 12,node distance = 1.2cm] {$b_1$};
		
		\node[connection] (31) [right of = S1, node distance = 1.2cm]{};
		\node[connection] (32) [right of = S2, node distance = 1.2cm]{};
		\node[connection] (3) [below of = 31, node distance = 0.5cm]{};
		\node[connection] (4) [right of = 3, node distance = 1cm]{};

					
		\draw[mid arrow,>=stealth] (0) -- (1) ;
		\draw (11) -- (12) -- (S2) -- (32);		
		\draw (11) -- (S1) -- (31) -- (32);
		\draw[mid arrow,>=stealth] (3) -- (4) {};

	\end{tikzpicture}}
\end{minipage}
\begin{minipage}{0.3\textwidth}
    \centering
    \resizebox{1\linewidth}{!}{	\begin{tikzpicture}
		[-,auto,thin,
			boxR/.style={rectangle,text width=3em,fill=blue!5,draw,align=center},
			boxC/.style={rectangle,text width=1em,fill=orange!5,draw,align=center},
			connection/.style={inner sep=0,outer sep=0},
			mid arrow/.style={draw, postaction={decorate},
				decoration={
					markings, mark=at position 0.6 with {\arrow[scale=2]{>}}}},
			end arrow/.style={draw, postaction={decorate},
				decoration={
					markings, mark=at position 1 with {\arrow[scale=2]{>}}}}	
			]

		\node[connection] (0) [text width = 0.0001em] {};
		\node[connection] (1) [right of = 0,node distance = 1cm] {};
		\node[connection] (11) [above of = 1, node distance = 0.8cm] {};
		\node[connection] (12) [below of = 1, node distance = 0.8cm]{};
		
		\node[boxR] (S1-11) [right of = 11,node distance = 1.2cm] {\phantom{b}\hspace{-0.5em}$a_1$};
		\node[boxR] (S1-12) [right of = 12,node distance = 1.2cm] {$b_1$};
		
		\node[connection] (2) [right of = S1-11, node distance = 1.2cm]{};
		\node[connection] (21) [above of = 2, node distance = 0.5cm] {};
		\node[connection] (22) [below of = 2, node distance = 0.5cm]{};
		
		\node[boxR] (S2-21) [right of = 21,node distance = 1.2cm] {\phantom{b}\hspace{-0.5em}$a_2$};
		\node[boxR] (S2-22) [right of = 22,node distance = 1.2cm] {$b_2$};
		
		\node[connection] (31) [right of = S2-21, node distance = 1.2cm]{};
		\node[connection] (32) [right of = S2-22, node distance = 1.2cm]{};
		\node[connection] (3) [below of = 31, node distance = 0.5cm]{};
		
		\node[connection] (41) [right of = 3, node distance = 0.5cm]{};
		\node[connection] (42) [below of = 41, node distance = 1.6cm]{};
		\node[connection] (4) [below of = 41, node distance = 0.8cm]{};
		
		\node[connection] (5) [right of = 4, node distance = 1cm]{};

					
		\draw[mid arrow,>=stealth] (0) -- (1);
		\draw (1)-- (12) -- (S1-12) -- (42) --(41) -- (3);		
		\draw (1) -- (11) -- (S1-11) -- (2) -- (21) -- (S2-21) -- (31) -- (32) -- (S2-22) -- (22) -- (2);
		\draw[mid arrow,>=stealth] (4) -- (5) {};

	\end{tikzpicture}} 
\end{minipage}
\begin{minipage}{0.45\textwidth}
    \centering
    \resizebox{1\linewidth}{!}{	\begin{tikzpicture}
		[-,auto,thin,
			boxR/.style={rectangle,text width=3em,fill=blue!5,draw,align=center},
			boxC/.style={rectangle,text width=1em,fill=orange!5,draw,align=center},
			connection/.style={inner sep=0,outer sep=0},
			mid arrow/.style={draw, postaction={decorate},
				decoration={
					markings, mark=at position 0.6 with {\arrow[scale=2]{>}}}},
			end arrow/.style={draw, postaction={decorate},
				decoration={
					markings, mark=at position 1 with {\arrow[scale=2]{>}}}}	
			]

		\node[connection] (0) [text width = 0.0001em] {};
		\node[connection] (1) [right of = 0,node distance = 1cm] {};
		\node[connection] (11) [above of = 1, node distance = 1cm] {};
		\node[connection] (12) [below of = 1, node distance = 1cm]{};
		
		\node[boxR] (S1-11) [right of = 11,node distance = 1.2cm] {\phantom{b}\hspace{-0.5em}$a_1$};
		\node[boxR] (S1-12) [right of = 12,node distance = 1.2cm] {$b_1$};
		
		\node[connection] (2) [right of = S1-11, node distance = 1.2cm]{};
		\node[connection] (21) [above of = 2, node distance = 0.8cm] {};
		\node[connection] (22) [below of = 2, node distance = 0.8cm]{};
		
		\node[boxR] (S2-21) [right of = 21,node distance = 1.2cm] {\phantom{b}\hspace{-0.5em}$a_2$};
		\node[boxR] (S2-22) [right of = 22,node distance = 1.2cm] {$b_2$};
		
		\node[connection] (3) [right of = S2-21, node distance = 1.2cm]{};
		\node[connection] (31) [above of = 3, node distance = 0.5cm] {};
		\node[connection] (32) [below of = 3, node distance = 0.5cm]{};
		
		\node[boxR] (S3-31) [right of = 31,node distance = 1.2cm] {\phantom{b}\hspace{-0.5em}$a_3$};
		\node[boxR] (S3-32) [right of = 32,node distance = 1.2cm] {$b_3$};
		
		\node[connection] (41) [right of = S3-31, node distance = 1.2cm]{};
		\node[connection] (42) [right of = S3-32, node distance = 1.2cm]{};
		\node[connection] (4) [below of = 41, node distance = 0.5cm]{};
		
		\node[connection] (51) [right of = 4, node distance = 0.5cm]{};
		\node[connection] (52) [below of = 51, node distance = 1.6cm]{};
		\node[connection] (5) [below of = 51, node distance = 0.8cm]{};
		
		\node[connection] (61) [right of = 5, node distance = 0.5cm]{};
		\node[connection] (62) [below of = 61, node distance = 2cm]{};
		\node[connection] (6) [below of = 61, node distance = 1cm]{};
		
		\node[connection] (7) [right of = 6, node distance = 1cm]{};			
					
					
		\draw[mid arrow,>=stealth] (0) -- (1);
		\draw (1)-- (12) -- (S1-12) -- (62) --(61) -- (5);		
		\draw (1) -- (11) -- (S1-11) -- (2) -- (22) -- (S2-22) -- (52) -- (51) -- (4);
		\draw (2) -- (21) -- (S2-21) -- (3) -- (31) -- (S3-31) -- (41) -- (42) -- (S3-32) -- (32) -- (3); 
		\draw[mid arrow,>=stealth] (6) -- (7) {};

	\end{tikzpicture}}
\end{minipage}
\caption{$D_1, D_2$ and $D_3$ diagrams.} \label{DDD}
\end{figure}


It must be noted that recursive diagrams contain as a special case R-L ladder networks.\\

Using the previous notation, we obtain the encoding for $D_1$, $D_2$ and $D_3$ presented in the Figure \ref{TTT}.

\begin{figure}[!ht]
    \resizebox{0.75\linewidth}{!}{	\begin{tikzpicture}
		[-,auto,thin,
			boxR/.style={rectangle,text width=4em,fill=blue!5,draw,align=center},
			boxC/.style={rectangle,text width=1em,fill=orange!5,draw,align=center},
			connection/.style={inner sep=0,outer sep=0},
			mid arrow/.style={draw, postaction={decorate},
				decoration={
					markings, mark=at position 0.75 with {\arrow[scale=2]{>}}}},
			end arrow/.style={draw, postaction={decorate},
				decoration={
					markings, mark=at position 1 with {\arrow[scale=2]{>}}}}	
			]

		\node[connection] (0) [text width = 0.0001em] {};
		
		
		\node[circle,fill,inner sep=2pt] (tree11) [right of = 0,node distance = 2cm]{};
		\node[] (texta1) [below of = tree11,node distance = 0.45cm]{$a_1$};
		\node[circle,fill,inner sep=2pt] (tree12) [right of = tree11,node distance = 1cm]{};
		\node[] (textb1) [below of = tree12,node distance = 0.4cm]{$b_1$};
			
		
		\node[circle,fill,inner sep=2pt] (tree21) [right of = tree12,node distance = 4cm]{};
		\node[] (texta12) [below of = tree21,node distance = 0.45cm]{$a_1$};
		\node[circle,fill,inner sep=2pt] (tree22) [right of = tree21,node distance = 1cm]{};
		\node[] (textb12) [below of = tree22,node distance = 0.4cm]{$b_1$};
		
		\node[connection] (T2-1) [above of = tree21,node distance = 1cm]{};
		\node[circle,fill,inner sep=2pt] (tree23) [left of = T2-1,node distance = 0.5cm]{};
		\node[] (texta21) [left of = tree23,node distance = 0.45cm]{$a_2$};
		\node[circle,fill,inner sep=2pt] (tree24) [right of = T2-1,node distance = 0.5cm]{};
		\node[] (texta22) [right of = tree24,node distance = 0.45cm]{$b_2$};
		
		
		\draw (tree21) -- (tree23);
		\draw (tree21) -- (tree24);
		
		
		\node[circle,fill,inner sep=2pt] (tree31) [right of = tree22,node distance = 4cm]{};
		\node[] (texta13) [below of = tree31,node distance = 0.45cm]{$a_1$};
		\node[circle,fill,inner sep=2pt] (tree32) [right of = tree31,node distance = 1cm]{};
		\node[] (textb13) [below of = tree32,node distance = 0.4cm]{$b_1$};
		
		\node[connection] (T3-1) [above of = tree31,node distance = 1cm]{};
		\node[circle,fill,inner sep=2pt] (tree33) [left of = T3-1,node distance = 0.5cm]{};
		\node[] (texta211) [left of = tree33,node distance = 0.45cm]{$a_2$};
		\node[circle,fill,inner sep=2pt] (tree34) [right of = T3-1,node distance = 0.5cm]{};
		\node[] (texta222) [right of = tree34,node distance = 0.45cm]{$b_2$};
		
		\node[connection] (T3-2) [above of = tree33,node distance = 1cm]{};
		\node[circle,fill,inner sep=2pt] (tree35) [left of = T3-2,node distance = 0.5cm]{};
		\node[] (texta213) [left of = tree35,node distance = 0.45cm]{$a_3$};
		\node[circle,fill,inner sep=2pt] (tree36) [right of = T3-2,node distance = 0.5cm]{};
		\node[] (texta224) [right of = tree36,node distance = 0.45cm]{$b_3$};
		
		
		\draw (tree31) -- (tree33);
		\draw (tree31) -- (tree34);
		\draw (tree33) -- (tree35);
		\draw (tree33) -- (tree36);
					
	\end{tikzpicture}}
\caption{T1, T2 and T3} \label{TTT}
\end{figure}

The special structure of recursive diagrams produces particular objects called {\bf forest} for the graph notation and {\bf words} for the algebraic one. Precisely, we have:

\begin{definition}
A rooted tree is a finite, connected graph without cycles, with a special vertex called the root.
\end{definition}

We denote by $\mathbf{T}$ the set of rooted trees. The number of vertices of rooted tree we call the \textbf{weight} of this tree. Then we denote by $\T(n)$ the set of rooted trees of weight $n$. As an example, we have 
$$
\T(1) = \{\bullet\},\;\; 
\T(2) = \Big\{\begin{tikzpicture}[baseline=-0.65ex]
\node[circle,fill,inner sep=1.5pt](t1) at (0,0.2){};
\node[circle,fill,inner sep=1.5pt](t2) at (0,-0.2){};
\draw (t1)--(t2);
\end{tikzpicture}\Big\},\;\;
\T(3) = \Big\{\begin{tikzpicture}[baseline=-0.65ex]
\node[circle,fill,inner sep=1.5pt](t1) at (-0.3,0){};
\node[circle,fill,inner sep=1.5pt](t2) at (0,0){};
\node[circle,fill,inner sep=1.5pt](t3) at (0.3,0){};
\end{tikzpicture},\;
\begin{tikzpicture}[baseline=-0.65ex]
\node[circle,fill,inner sep=1.5pt](t1) at (-0.2,0.2){};
\node[circle,fill,inner sep=1.5pt](t2) at (0.2,0.2){};
\node[circle,fill,inner sep=1.5pt](t3) at (0,-0.2){};
\draw (t1)--(t3)--(t2);
\end{tikzpicture}
\Big\}, \;\;\ldots
$$

As already seen in the representation of the diagram $D_n$, we have not rooted trees but a finite collection of rooted trees called forests. Precisely:

\begin{definition}
    A rooted forest is a graph such that each connected component is rooted tree. 
\end{definition}

The set of forest is denoted by $\textbf{F}$. The set of rooted forest of weight $n$ is denoted by $\textbf{F}(n)$
$$
\textbf{F}(1) = \{\bullet\},\;\;
\textbf{F}(2) = \Big\{\begin{tikzpicture}[baseline=-0.65ex]
\node[circle,fill,inner sep=1.5pt](t1) at (0,0.2){};
\node[circle,fill,inner sep=1.5pt](t2) at (0,-0.2){};
\draw (t1)--(t2);
\end{tikzpicture},\; 
\begin{tikzpicture}[baseline=-0.65ex]
\node[circle,fill,inner sep=1.5pt](t1) at (-0.15,0){};
\node[circle,fill,inner sep=1.5pt](t3) at (0.15,0){};
\end{tikzpicture}
\Big\},\;\;
\textbf{F}(3) = \Big\{\begin{tikzpicture}[baseline=-0.65ex]
\node[circle,fill,inner sep=1.5pt](t1) at (-0.3,0){};
\node[circle,fill,inner sep=1.5pt](t2) at (0,0){};
\node[circle,fill,inner sep=1.5pt](t3) at (0.3,0){};
\end{tikzpicture},\;
\begin{tikzpicture}[baseline=-0.65ex]
\node[circle,fill,inner sep=1.5pt](t1) at (-0.2,0.2){};
\node[circle,fill,inner sep=1.5pt](t2) at (0.2,0.2){};
\node[circle,fill,inner sep=1.5pt](t3) at (0,-0.2){};
\draw (t1)--(t3)--(t2);
\end{tikzpicture},\;
\begin{tikzpicture}[baseline=-0.65ex]
\node[circle,fill,inner sep=1.5pt](t1) at (0,0.2){};
\node[circle,fill,inner sep=1.5pt](t2) at (0,-0.2){};
\node[circle,fill,inner sep=1.5pt](t3) at (0.4,-0.2){};
\draw (t1)--(t2);
\end{tikzpicture},\;
\begin{tikzpicture}[baseline=-0.65ex]
\node[circle,fill,inner sep=1.5pt](t1) at (-0.3,0){};
\node[circle,fill,inner sep=1.5pt](t2) at (0,0){};
\node[circle,fill,inner sep=1.5pt](t3) at (0.3,0){};
\draw (t1)--(t2)--(t3);
\end{tikzpicture}
\Big\},\;\; \ldots
$$

As we can see, each electronic diagram $D_n$ is encoded by a rooted forest of weight $2n$. Precisely, each electronic diagram $D_n$ is encoded by \textbf{decorated forest}. By \textbf{decorated forest} we mean the forest where elements of a given {\bf alphabet} set $\mathbf{A}$ are attached to each vertices. For an electronic diagram $D_n$ we combine the set $\mathbf{A}_n=\{a_1,b_1,a_2,b_2,\ldots,a_n,b_n\}$.  \\

In applications to R-L ladder networks, the number $a_i$ will represent inertances and $b_i$ resistances.\\

Denoting by $S_n$ the basic element 
$$
\begin{tikzpicture}
\node[circle,fill,inner sep=2pt] (tree1) [node distance = 2cm]{};
		\node[] (texta) [below of = tree1,node distance = 0.45cm]{$L_n$};
		\node[circle,fill,inner sep=2pt] (tree2) [right of = tree1,node distance = 1cm]{};
		\node[] (textb) [below of = tree2,node distance = 0.45cm]{$R_n$};
\end{tikzpicture}  
$$
we can construct recursively the diagram $D_n$ using the notion of {\bf grafting map}: let $T_1 \dots T_k$ be a decorated forest and $a\in \mathbf{A}$. We denote by $\beta_a (T_1 \dots T_k )$ the tree having $a$ as a root, i.e.
\begin{equation}
\label{defgraft}
\beta_a (T_1 \dots T_k )=a (T_1 \otimes T_2 \dots \otimes T_k ). 
\end{equation}
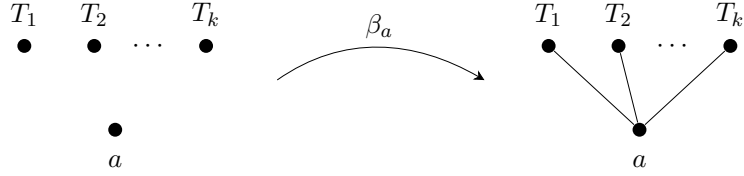
\begin{figure}[!ht]
    \resizebox{0.65\linewidth}{!}{	\begin{tikzpicture}
		[-,auto,thin,
			boxR/.style={rectangle,text width=4em,fill=blue!5,draw,align=center},
			boxC/.style={rectangle,text width=1em,fill=orange!5,draw,align=center},
			connection/.style={inner sep=0,outer sep=0},
			mid arrow/.style={draw, postaction={decorate},
				decoration={
					markings, mark=at position 0.75 with {\arrow[scale=2]{>}}}},
			end arrow/.style={draw, postaction={decorate},
				decoration={
					markings, mark=at position 1 with {\arrow[scale=2]{>}}}}	
			]

		
		\node[circle,fill,inner sep=2pt] (a) []{};
		\node[] (texta) [below of = a,node distance = 0.45cm]{$a$};
		
		\node[connection] (1) [above of = a, node distance = 1.2cm]{};
		
		\node[circle,fill,inner sep=2pt] (T1) [left of = 1,node distance = 1.3cm]{};
		\node[] (textT1) [above of = T1, node distance = 0.45cm]{$T_1$};
		
		\node[circle,fill,inner sep=2pt] (T2) [left of = 1,node distance = 0.3cm]{};
		\node[] (textT2) [above of = T2,node distance = 0.45cm]{$T_2$};
		
		\node[] (dots) [right of = T2, node distance = 0.8cm]{$\cdots$};
		
		\node[circle,fill,inner sep=2pt] (Tk) [right of = dots, node distance = 0.8cm]{};
		\node[] (textTk) [above of = Tk, node distance = 0.45cm]{$T_k$};
		
		
		\node[connection] (beta0) [below of = Tk,node distance = 0.5cm]{};
		\node[connection] (beta1) [right of = beta0, node distance = 1cm]{};
		\node[connection] (beta2) [right of = beta1, node distance = 3cm]{};		
		
		
		\node[circle,fill,inner sep=2pt] (a1) [right of = a,node distance = 7.5cm]{};
		\node[] () [below of = a1, node distance = 0.45cm]{$a$};
		
		\node[connection] (2) [above of = a1, node distance = 1.2cm]{};
		
		\node[circle,fill,inner sep=2pt] (T1-1) [left of = 2,node distance = 1.3cm]{};
		\node[] () [above of = T1-1, node distance = 0.45cm]{$T_1$};
		
		\node[circle,fill,inner sep=2pt] (T2-1) [left of = 2,node distance = 0.3cm]{};
		\node[] () [above of = T2-1,node distance = 0.45cm]{$T_2$};
		
		\node[] (dots) [right of = T2-1, node distance = 0.8cm]{$\cdots$};
		
		\node[circle,fill,inner sep=2pt] (Tk-1) [right of = dots, node distance = 0.8cm]{};
		\node[] () [above of = Tk-1, node distance = 0.45cm]{$T_k$};
		
		\draw (a1) -- (T1-1);
		\draw (a1) -- (T2-1);
		\draw (a1) -- (Tk-1);
		\draw [-{Stealth[length=1.5mm, width=1.5mm]}] (beta1) to [out=35,in=150] node[above] {$\beta_a$} (beta2);
					
	\end{tikzpicture}}
\caption{Grafting map} \label{grafting}
\end{figure}
Using this definition, we can of course graft a tree on a given forest by selecting a particular vertices. 

We then have 
\begin{equation}
D_1 =S_1 ,\ D_2 =\beta_{a_1} (S_2 D_1) ,\ \ D_3 =\beta_{a_2} (S_3 D_2 ),\dots 
\end{equation}

This can be seen as a composition of grafting maps:
\begin{equation}
\label{recurform}
D_n =\beta_{a_{n-1}} (S_n \beta_{a_{n-2}} (S_{n-1} \dots \beta_{a_1} (S_2 S_1 )\dots )) .
\end{equation}
\begin{figure}[!ht]
\centering
\begin{minipage}[b]{0.46\textwidth}
    \centering
    \resizebox{1\linewidth}{!}{	\begin{tikzpicture}
		[-,auto,thin,decoration={brace,raise=2pt},
			boxR/.style={rectangle,text width=4em,fill=blue!5,draw,align=center},
			boxC/.style={rectangle,text width=1em,fill=orange!5,draw,align=center},
			connection/.style={inner sep=0,outer sep=0},
			mid arrow/.style={draw, postaction={decorate},
				decoration={
					markings, mark=at position 0.75 with {\arrow[scale=2]{>}}}},
			end arrow/.style={draw, postaction={decorate},
				decoration={
					markings, mark=at position 1 with {\arrow[scale=2]{>}}}}	
			]

		
		
		\node[circle,fill,inner sep=2pt] (a1) []{};
		\node[] (ta1) [below of = a1,node distance = 0.45cm]{$a_1$};
		\node[circle,fill,inner sep=2pt] (b1) [right of = a1,node distance = 1cm]{};
		\node[] (tb1) [below of = b1,node distance = 0.4cm]{$b_1$};
				
		\node[connection] (T-1) [above of = a1,node distance = 1.2cm]{};
		\node[circle,fill,inner sep=2pt] (a2) [left of = T-1,node distance = 0.5cm]{};
		\node[] (ta2) [left of = a2,node distance = 0.45cm]{$a_2$};
		\node[circle,fill,inner sep=2pt] (b2) [right of = T-1,node distance = 0.5cm]{};
		\node[] (tb2) [right of = b2,node distance = 0.45cm]{$b_2$};
		
		\node[connection] (S1-1) [below left of = ta2,node distance=0.8cm] {};
		\node[connection] (S1-2) [above left of = ta2,node distance=0.8cm] {};
		\node[connection] (S1-3) [above right of = tb2,node distance=0.8cm] {};
		\node[connection] (S1-4) [below right of = tb2,node distance=0.8cm] {};
		
		\node[connection] (beta0) [above right of = b1,node distance = 0.7cm]{};
		\node[connection] (beta1) [right of = beta0, node distance = 0.5cm]{};
		\node[connection] (beta2) [right of = beta1, node distance = 2.5cm]{};
		
		\draw[densely dashed] (S1-1) -- (S1-2) -- node[above]{$S_2$} (S1-3) -- (S1-4) -- (S1-1);
		
		\draw[decorate,thick] (tb1.south east) -- (ta1.south west|-tb1.south) node[midway,below=4pt]{$D_1 = S_1$};
		
		\draw [-{Stealth[length=1.5mm, width=1.5mm]}] (beta1) to [out=35,in=150] node[above] {$\beta_{a_1}(S_2D_1)$} (beta2);

		\node[connection] (beta3) [right of = beta2, node distance = 1cm]{};
		\node[circle,fill,inner sep=2pt] (a1-1) [below right of = beta3,node distance = 0.7cm]{};
		\node[] (ta1-1) [below of = a1-1,node distance = 0.45cm]{$a_1$};
		\node[circle,fill,inner sep=2pt] (b1-1) [right of = a1-1,node distance = 1cm]{};
		\node[] (tb1-1) [below of = b1-1,node distance = 0.4cm]{$b_1$};
		
		\node[connection] (T-2) [above of = a1-1,node distance = 1.2cm]{};
		\node[circle,fill,inner sep=2pt] (a2-1) [left of = T-2,node distance = 0.5cm]{};
		\node[] (ta2-1) [left of = a2-1,node distance = 0.45cm]{$a_2$};
		\node[circle,fill,inner sep=2pt] (b2-1) [right of = T-2,node distance = 0.5cm]{};
		\node[] (tb2-1) [right of = b2-1,node distance = 0.45cm]{$b_2$};
		
		\draw[decorate,thick] (tb1-1.south east) -- (ta1-1.south west|-tb1-1.south) node[midway,below=4pt]{$D_2$};
		\draw (a1-1) -- (a2-1);
		\draw (a1-1) -- (b2-1);
					
	\end{tikzpicture}}
\end{minipage}\hspace{9mm}
\begin{minipage}[b]{0.46\textwidth}
    \centering
    \resizebox{1\linewidth}{!}{	\begin{tikzpicture}
		[-,auto,thin,decoration={brace,raise=2pt},
			boxR/.style={rectangle,text width=4em,fill=blue!5,draw,align=center},
			boxC/.style={rectangle,text width=1em,fill=orange!5,draw,align=center},
			connection/.style={inner sep=0,outer sep=0},
			mid arrow/.style={draw, postaction={decorate},
				decoration={
					markings, mark=at position 0.75 with {\arrow[scale=2]{>}}}},
			end arrow/.style={draw, postaction={decorate},
				decoration={
					markings, mark=at position 1 with {\arrow[scale=2]{>}}}}	
			]

		
		
		\node[circle,fill,inner sep=2pt] (a1) []{};
		\node[] (ta1) [below of = a1,node distance = 0.45cm]{$a_1$};
		\node[circle,fill,inner sep=2pt] (b1) [right of = a1,node distance = 1cm]{};
		\node[] (tb1) [below of = b1,node distance = 0.4cm]{$b_1$};
		
		\node[connection] (T-1) [above of = a1,node distance = 1.2cm]{};
		\node[circle,fill,inner sep=2pt] (a2) [left of = T-1,node distance = 0.5cm]{};
		\node[] (ta2) [left of = a2,node distance = 0.45cm]{$a_2$};
		\node[circle,fill,inner sep=2pt] (b2) [right of = T-1,node distance = 0.5cm]{};
		\node[] (tb2) [right of = b2,node distance = 0.45cm]{$b_2$};
		
		\draw[decorate,thick] (tb1.south east) -- (ta1.south west|-tb1.south) node[midway,below=4pt]{$D_2$};
		\draw (a1) -- (a2);
		\draw (a1) -- (b2);
		
		\node[connection] (T-2) [above of = a2,node distance = 1.2cm]{};
		\node[circle,fill,inner sep=2pt] (a3) [left of = T-2,node distance = 0.5cm]{};
		\node[] (ta3) [left of = a3,node distance = 0.45cm]{$a_3$};
		\node[circle,fill,inner sep=2pt] (b3) [right of = T-2,node distance = 0.5cm]{};
		\node[] (tb3) [right of = b3,node distance = 0.45cm]{$b_3$};
		
		\node[connection] (S1-1) [below left of = ta3,node distance=0.8cm] {};
		\node[connection] (S1-2) [above left of = ta3,node distance=0.8cm] {};
		\node[connection] (S1-3) [above right of = tb3,node distance=0.8cm] {};
		\node[connection] (S1-4) [below right of = tb3,node distance=0.8cm] {};
		
		\node[connection] (beta0) [above right of = b2,node distance = 0.5cm]{};
		\node[connection] (beta1) [right of = beta0, node distance = 0.7cm]{};
		\node[connection] (beta2) [right of = beta1, node distance = 2.5cm]{};
		
		\draw[densely dashed] (S1-1) -- (S1-2) -- node[above]{$S_3$} (S1-3) -- (S1-4) -- (S1-1);
		
		\draw [-{Stealth[length=1.5mm, width=1.5mm]}] (beta1) to [out=35,in=150] node[above] {$\beta_{a_2}(S_3D_2)$} (beta2);

		\node[circle,fill,inner sep=2pt] (a1-1) [right of = b1, node distance = 4.7cm]{};
		\node[] (ta1-1) [below of = a1-1,node distance = 0.45cm]{$a_1$};
		\node[circle,fill,inner sep=2pt] (b1-1) [right of = a1-1,node distance = 1cm]{};
		\node[] (tb1-1) [below of = b1-1,node distance = 0.4cm]{$b_1$};
		
		\node[connection] (T1-1) [above of = a1-1,node distance = 1.2cm]{};
		\node[circle,fill,inner sep=2pt] (a2-1) [left of = T1-1,node distance = 0.5cm]{};
		\node[] (ta2-1) [left of = a2-1,node distance = 0.45cm]{$a_2$};
		\node[circle,fill,inner sep=2pt] (b2-1) [right of = T1-1,node distance = 0.5cm]{};
		\node[] (tb2-1) [right of = b2-1,node distance = 0.45cm]{$b_2$};
		
		\draw[decorate,thick] (tb1-1.south east) -- (ta1-1.south west|-tb1-1.south) node[midway,below=4pt]{$D_3$};
		\draw (a1-1) -- (a2-1);
		\draw (a1-1) -- (b2-1);
		
		\node[connection] (T1-2) [above of = a2-1,node distance = 1.2cm]{};
		\node[circle,fill,inner sep=2pt] (a3-1) [left of = T1-2,node distance = 0.5cm]{};
		\node[] (ta3-1) [left of = a3-1,node distance = 0.45cm]{$a_3$};
		\node[circle,fill,inner sep=2pt] (b3-1) [right of = T1-2,node distance = 0.5cm]{};
		\node[] (tb3-1) [right of = b3-1,node distance = 0.45cm]{$b_3$};
		
		\draw (a2-1) -- (a3-1);
		\draw (a2-1) -- (b3-1);
					
	\end{tikzpicture}} 
\end{minipage}
\caption{Recursive construction of D2 and D3} \label{DDDD}
\end{figure}
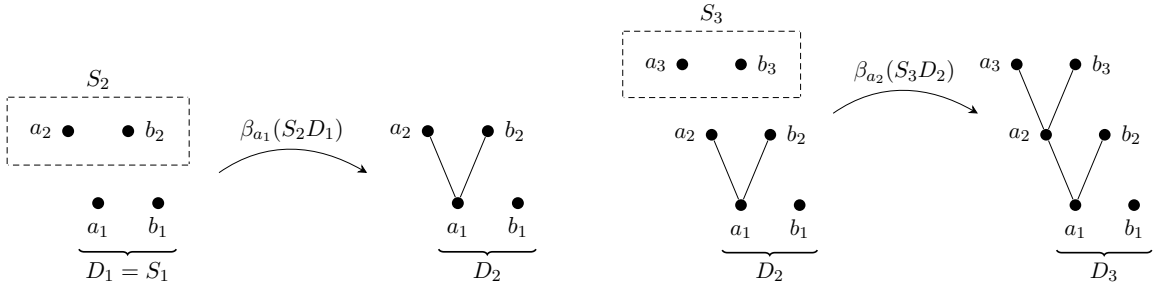

As the forest $S_n$ have a very particular structure, we can even give a more explicit form:
\begin{equation}
D_1 = a_1 \otimes b_1,\ D_2 =a_1 (a_2 \otimes b_2) \otimes b_1 ,\ D_3 =a_1 (a_2 (a_3 \otimes b_3)\otimes b_2 )\otimes b_1 ,\dots .
\end{equation}

These relations will be useful by deriving the recursive form of the admittance for the diagrams $D_n$, $n\geq 1$.

\section{Explicit computation of admittance/impedance for recursive diagram \texorpdfstring{$(D_n)_{n\geq 1}$}{Dn (n>=1)}}
\label{sectionrecursive}

\subsection{General properties of admittance for electronic diagrams}

Let us consider an electronic diagram encoded by a decorated forest $F$. We are interested in the explicit computation of the associated admittance denoted by $Y(F)$ and impedance denoted by $Z(F)$ satisfying 
\begin{equation}
Y(F)=\di\frac{1}{Z(F)} .
\end{equation}
The mapping $Y:\mathscr{F} \rightarrow \R$ can be computed recursively on the weight of the Forest. Indeed, we have the following well known algebraic relations for the computations of $Y$:

\begin{lemma}
Let $F_1$ and $F_2$ be two forest. Then, we have 
\begin{equation}
\label{algrules}
\left .
\begin{array}{lll}
Y(F_1 F_2 )=\di\frac{1}{Y(F_1)^{-1}+Y(F_2)^{-1}} ,\\
Y(F_1 \otimes F_2 ) = Y(F_1 )+Y(F_2 ) .
\end{array}
\right .
\end{equation}
\end{lemma}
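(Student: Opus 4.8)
The plan is to derive both identities directly from the physical meaning of admittance and impedance together with Kirchhoff's laws, since the forest operations $F_1 F_2$ and $F_1 \otimes F_2$ encode, by construction, the series and parallel connection of the two subnetworks. I would begin by recalling the definitions attached to a two-terminal network $F$: if $V$ denotes the voltage across its terminals and $I$ the current entering it, then the admittance is $Y(F) = I/V$ and the impedance is $Z(F) = V/I$, so that $Y(F)\,Z(F) = 1$. These are exactly the quantities feeding into the transfer-function formalism of the introduction.

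I would treat the parallel rule first, as it is the more direct, invoking Kirchhoff's current law. When $F_1$ and $F_2$ are placed in parallel their terminals are identified, so both subnetworks see the same voltage $V$, while the total current splits as $I = I_1 + I_2$ with $I_j$ the current through $F_j$. Substituting $I_j = Y(F_j)\,V$ gives $I = \bigl(Y(F_1) + Y(F_2)\bigr)V$, hence $Y(F_1 \otimes F_2) = I/V = Y(F_1) + Y(F_2)$, which is the second identity.

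For the series rule I would use Kirchhoff's voltage law. When $F_1$ and $F_2$ are placed in series the same current $I$ flows through both, whereas the total voltage is the sum $V = V_1 + V_2$, with $V_j$ the drop across $F_j$. Writing $V_j = Z(F_j)\,I = Y(F_j)^{-1} I$ yields $V = \bigl(Y(F_1)^{-1} + Y(F_2)^{-1}\bigr) I$, so that $Z(F_1 F_2) = Y(F_1)^{-1} + Y(F_2)^{-1}$; taking reciprocals and using $Y = Z^{-1}$ recovers the first identity.

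The only point genuinely requiring care — and the main obstacle, such as it is — is not the algebra but the justification that the abstract operations on forests really correspond to the two physical wiring patterns, so that the same-current (series) and same-voltage (parallel) hypotheses are legitimate. This is guaranteed by the encoding fixed in the previous section, where concatenation and $\otimes$ were defined precisely to represent the series and parallel connection of two-terminal components. Once this dictionary is in place the proof reduces to the two elementary computations above, which is why the relations are labelled \emph{well known}.
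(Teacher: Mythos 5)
Your proof is correct: the parallel and series derivations via Kirchhoff's current and voltage laws, together with $Y = Z^{-1}$, are exactly the standard circuit-theory argument, and your remark that the real content lies in the dictionary between the forest operations and the physical wiring is well placed. Note that the paper itself offers no proof of this lemma at all --- it is stated as a ``well known'' algebraic relation --- so your derivation simply supplies the justification the paper leaves implicit, and it does so without any gap.
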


Using this rules, the admittance of any decorated forest with decorations $\{ a_1,\dots ,a_k \}$ can be computed explicitly using the single admittance $Y(a_i )$. Moreover, as the algebraic rules are only using addition and inversion, we have the following structural lemma:

\begin{lemma}
\label{struclemma}
For any decorated forest of finite weight $n$ with decorations $\{ a_1 ,\dots ,a_n\}$, the corresponding admittance is a rational function of the single admittance $Y(a_1),\dots ,Y(a_n )$. 
\end{lemma}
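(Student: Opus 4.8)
The plan is to argue by induction on the weight $n$ of the decorated forest, using the two algebraic rules \eqref{algrules} as the only tools for building up the admittance. Write $K = \R(Y(a_1),\dots,Y(a_n))$ for the field of rational functions in the formal variables $Y(a_1),\dots,Y(a_n)$; the statement is precisely that $Y(F) \in K$ for every decorated forest $F$ of weight $n$. Since $K$ is a field, it is closed under addition, multiplication, and inversion of nonzero elements, and these are exactly the operations appearing in \eqref{algrules}: the parallel rule uses a sum, while the series rule can be rewritten as $Y(F_1 F_2) = Y(F_1)Y(F_2)/\bigl(Y(F_1)+Y(F_2)\bigr)$, a quotient of polynomials in $Y(F_1)$ and $Y(F_2)$. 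Hence, once the sub-forests have admittances in $K$, so does $F$.

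For the base case $n=1$, a forest of weight $1$ is a single decorated vertex $a_1$, and $Y(F)=Y(a_1)$ is trivially in $K$. For the inductive step, I would assume the claim for all forests of weight strictly less than $n$ and let $F$ have weight $n \geq 2$. The key combinatorial observation is that $F$ admits one of two structural decompositions into strictly smaller forests. If $F$ is \emph{disconnected}, i.e. consists of at least two rooted trees, one splits off a component and writes $F = F_1 \otimes F_2$ with $F_1, F_2$ nonempty of weight $< n$; then $Y(F) = Y(F_1) + Y(F_2)$ by the parallel rule. If instead $F$ is a \emph{single tree} $T$ with root decorated by some $a$, then by the grafting definition \eqref{defgraft} one has $T = a\,(T_1 \otimes \dots \otimes T_k) = a\,F'$, where $F' = T_1 \otimes \dots \otimes T_k$ is the forest of subtrees hanging from the root and has weight $n-1$; the series rule then expresses $Y(T)$ through $Y(a)$ and $Y(F')$. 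In either case the induction hypothesis places the admittances of the smaller pieces in $K$, and by the closure remarks above $Y(F) \in K$, completing the induction.

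The part requiring the most care is the structural decomposition itself: one must verify that the algebraic encoding by words in juxtaposition and $\otimes$ is in bijection with decorated forests, and that every forest of weight $\geq 2$ genuinely falls into exactly one of the two cases above, so that the recursion is well founded and terminates at single vertices. A secondary, more technical issue is that the series rule involves inversions, so one should check that no denominator is the identically zero rational function; I expect this to follow from a parallel induction showing each $Y(F)$ is a \emph{nonzero} element of $K$ (for positive real values of the $Y(a_i)$ every admittance is a strictly positive real, so the rational functions produced are never identically zero), which legitimizes every inversion performed along the way.
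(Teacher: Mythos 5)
Your proof is correct and follows essentially the same route as the paper: the paper offers no formal proof of this lemma, only the one-line observation that the rules \eqref{algrules} involve nothing but addition and inversion, and your induction on the weight (splitting a forest into parallel components, and a tree into its root in series with the grafted subforest) is precisely the argument that remark compresses. If anything you are more careful than the paper, since you also address the well-foundedness of the series/parallel decomposition and the non-vanishing of denominators via positivity at positive real arguments, points the paper passes over silently.
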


This lemma has an interesting consequence for $R-L$ ladder network as in this case the single admittance have for all $\omega\in \R$ (resp. $\R^*$) a very special form given by: 
\begin{equation}
Y(R) (\omega )=\di\frac{1}{R},\ \ \mbox{\rm and}\ \ \ Y(L)(\omega ) = \di\frac{1}{jL\omega},\ \ \mbox{\rm with}\ j^2 =-1 .
\end{equation}
Then, using the structural lemma \ref{struclemma}, we have :

\begin{lemma}
\label{strucRL}
Let $(D_n )_{n\geq 1}$ be a recursive diagram with decorations made of resistances $R_i$ and inertance $L_i$. Then, for all $n \geq 1$, the admittance $Y(D_n) (\omega )$ is a rational function of $\omega$. 
\end{lemma}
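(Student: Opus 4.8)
The plan is to invoke the structural Lemma~\ref{struclemma} and then argue by closure of the field of rational functions under substitution. First, the decorated forest encoding $D_n$ has finite weight $2n$, with decorations drawn from $\{R_1,L_1,\dots,R_n,L_n\}$. Hence, by Lemma~\ref{struclemma}, the admittance $Y(D_n)$ is a rational function of the single admittances $Y(R_1),\dots,Y(R_n),Y(L_1),\dots,Y(L_n)$: there exist polynomials $P$ and $Q$ in $2n$ variables with
\[
Y(D_n) = \frac{P\bigl(Y(R_1),\dots,Y(L_n)\bigr)}{Q\bigl(Y(R_1),\dots,Y(L_n)\bigr)}.
\]

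Next I would substitute the explicit expressions $Y(R_i)(\omega)=1/R_i$ and $Y(L_i)(\omega)=1/(jL_i\omega)$. As functions of $\omega$, each single admittance is rational: $Y(R_i)$ is a nonzero constant and $Y(L_i)$ equals $1/(jL_i\omega)$. Since the set $\mathbb{C}(\omega)$ of rational functions in $\omega$ is a field, it is closed under addition, subtraction, multiplication and division; consequently, evaluating the polynomials $P$ and $Q$ at rational-function arguments and then taking their quotient again produces an element of $\mathbb{C}(\omega)$. This immediately yields that $Y(D_n)(\omega)$ is a rational function of $\omega$.

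The only point requiring a little care is the well-definedness of this quotient, namely that $Q\bigl(Y(R_1),\dots,Y(L_n)\bigr)$ does not vanish identically after substitution; this is guaranteed because $Y(D_n)$ is a genuine finite admittance assembled through the rules \eqref{algrules}, whose only inversions are of nonvanishing expressions. Equivalently, one may bypass the structural lemma and argue directly by induction on $n$ via the recursive decomposition \eqref{recurform}: the base case $D_1=a_1\otimes b_1$ gives $Y(D_1)=1/(jL_1\omega)+1/R_1$, manifestly rational in $\omega$, while the inductive step only applies the two operations appearing in \eqref{algrules}, namely addition and the harmonic combination $1/(x^{-1}+y^{-1})$, both of which preserve rationality in $\omega$. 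In either formulation there is no genuine obstacle; the whole content is the stability of rationality in $\omega$ under the operations that generate the admittance, and the main thing to state cleanly is this field-closure (or induction) mechanism together with the restriction to $\omega\neq 0$ forced by the inertance admittance $Y(L_i)$.
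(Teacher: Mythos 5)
Your proof is correct and follows essentially the same route as the paper, which likewise deduces the lemma directly from the structural Lemma~\ref{struclemma} together with the explicit single admittances $Y(R)(\omega)=1/R$ and $Y(L)(\omega)=1/(jL\omega)$. The extra care you take (field closure in $\mathbb{C}(\omega)$, well-definedness of the quotient, the alternative induction via \eqref{recurform}, and the restriction to $\omega\neq 0$) only makes explicit what the paper leaves implicit in its one-line derivation.
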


The explicit form of the asymptotic impedance is nevertheless not so easy to catch and we need to have more information on the structure and properties of this function when $n$ goes to infinity.

\subsection{Admittance of recursive diagrams}

Let us consider a recursive diagram $(D_n )_{n\geq 1}$ with decorations $\{ a_1,b_1 ,a_2,b_2 ,\dots ,a_n,b_n,\dots \}$. Formula \eqref{recurform} can be written as 
\begin{equation}
D_2 =\beta_{a_1} (S_2 S_1 )  =a_1 S_2 \otimes b_1 ,\ 
D_3 = a_1 (\beta_{a_2} (S_3 S_2 )) \otimes b_1 =a_1 (a_2 S_3 \otimes b_2 ) \otimes b_1 ,\dots 
\end{equation}
which can be written for all $n\geq 1$ as 
\begin{equation}
\label{recurform2}
D_{n+1} =a_1 (a_2 (\dots a_{n-1}(a_n S_{n+1} \otimes b_n)\otimes b_{n-1} \dots)\otimes b_2) \otimes b_1  .
\end{equation}
A more convenient way to write this relation is to denote by $F(a_1,b_1,\dots ,a_n,b_n)$ the decorated forest associated to $D_n$. Equation \eqref{recurform2} is then given by 
\begin{equation}
\label{forestrecursiv}
F(a_1,b_1,\dots ,a_n,b_n )= a_1 F(a_2,b_2 ,\dots ,a_n,b_n ) \otimes b_1 .
\end{equation}

A direct consequence of the algebraic rules \eqref{algrules} is that the admittance of $F(a_1,b_1,\dots ,a_n,b_n)$ satisfies
\begin{equation}
\label{recuradmi}
Y(F(a_1,b_1,\dots ,a_n,b_n))=Y(b_1 ) +\di\frac{1}{Y(a_1)^{-1} +Y(F(a_2 ,b_2,\dots ,a_n,b_n ))^{-1}}
,
\end{equation}
for all $n\geq 1$.\\ 

Using the classical notation for {\bf continued fractions} given by 
\begin{equation}
[\alpha_1 ,\dots ,\alpha_n ] =\alpha_1 +\di\frac{1}{\alpha_2 +\di\frac{1}{\ddots +\di\frac{1}{\alpha_n}}} , 
\end{equation}
we can rewrite relation \eqref{recuradmi} as
\begin{equation}
\label{recuradmi2}
Y(F(a_1,b_1,\dots ,a_n,b_n))=[Y(b_1 ),Y(a_1)^{-1} ,Y(F(a_2 ,b_2,\dots ,a_n,b_n ))]
.
\end{equation}
for all $n\geq 1$.\\ 

Applying recursively relation \eqref{recuradmi2}, we obtain the following explicit form of the admittance:

\begin{lemma}
For any recursive diagram $(D_n )_{n\geq 1}$ with decorations $\{ a_1,b_1 ,a_2,b_2 ,\dots ,a_n,b_n,\dots \}$ and $F(a_1,b_1,\dots ,a_n,b_n)$ as the decorated forest associated to $D_n$, $n\geq 1$, we have 
\begin{equation}
\label{recursiveformula}
Y(F(a_1,b_1,\dots ,a_n,b_n))=[Y(b_1),Y(a_1)^{-1},Y(b_2),Y(a_2)^{-1},\dots ,Y(b_n),Y(a_n)^{-1}] .    
\end{equation}
\end{lemma}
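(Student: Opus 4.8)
The plan is to prove \eqref{recursiveformula} by induction on $n$, the engine of the argument being the one-step recursion \eqref{recuradmi2} combined with a purely formal \emph{concatenation} (nesting) identity for continued fractions. First I would dispose of the base case $n=1$: since $D_1 = a_1 \otimes b_1$, the parallel rule in \eqref{algrules} gives $Y(F(a_1,b_1)) = Y(a_1) + Y(b_1)$, and by the defining formula for continued fractions this equals $Y(b_1) + \frac{1}{\,Y(a_1)^{-1}\,} = [Y(b_1),Y(a_1)^{-1}]$, which is exactly \eqref{recursiveformula} for $n=1$.

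The key auxiliary fact I would isolate and record separately is the substitution property: for any scalars $\alpha_1,\dots,\alpha_m$ and any continued fraction $[\beta_1,\dots,\beta_p]$,
\begin{equation}
[\alpha_1,\dots,\alpha_m,[\beta_1,\dots,\beta_p]] = [\alpha_1,\dots,\alpha_m,\beta_1,\dots,\beta_p].
\end{equation}
This follows by a short induction on $m$ directly from the recursive reading of the defining formula, namely $[\gamma_1,\gamma_2,\dots,\gamma_q] = \gamma_1 + \frac{1}{[\gamma_2,\dots,\gamma_q]}$. For $m=1$ both sides equal $\alpha_1 + \frac{1}{[\beta_1,\dots,\beta_p]}$; for the inductive step one peels off the leading $\alpha_1$ from both sides and applies the hypothesis to the remaining shorter expression.

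For the inductive step in $n$ (with $n \geq 2$), I would assume that \eqref{recursiveformula} holds for the $n-1$ pairs $a_2,b_2,\dots,a_n,b_n$, i.e.
\begin{equation}
Y(F(a_2,b_2,\dots,a_n,b_n)) = [Y(b_2),Y(a_2)^{-1},\dots,Y(b_n),Y(a_n)^{-1}].
\end{equation}
Substituting this expression into the third slot on the right-hand side of \eqref{recuradmi2} and then invoking the substitution property (with $m=2$, $\alpha_1 = Y(b_1)$ and $\alpha_2 = Y(a_1)^{-1}$) immediately collapses the nested continued fraction into \eqref{recursiveformula} for $n$, closing the induction. I do not expect a genuine obstacle here: all the electrical content is already packaged in \eqref{recuradmi2}, and the only thing left is the formal concatenation identity, whose sole delicate point is keeping the index bookkeeping straight so that the tail $[Y(b_2),\dots,Y(a_n)^{-1}]$ is appended in the correct order and the total length of the continued fraction is $2n$.
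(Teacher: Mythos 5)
Your proof is correct and takes essentially the same route as the paper, which simply asserts that the formula follows by ``applying recursively relation \eqref{recuradmi2}'' without writing out the details. Your induction on $n$, anchored by the base case $Y(F(a_1,b_1))=Y(a_1)+Y(b_1)=[Y(b_1),Y(a_1)^{-1}]$ and the concatenation identity for nested continued fractions, is precisely the formalization of that recursive application.
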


In order to go further, we need some information on the decorations. This is done in the next Section by considering fractal R-L networks.

\section{Admittance for fractal R-L network}
\label{sectioncomputation}
\subsection{Fractal networks}

We begin with a definition of a {\bf fractal diagram}:

\begin{definition}
A recursive diagram $(D_n)_{n\geq 1}$ with decorations $\{ a_1,b_1 ,a_2,b_2 ,\dots ,a_n,b_n,\dots \}$ is called fractal if for all $i\geq 1$, we have 
\begin{equation}
\label{fracrel}
a_{i+1} = \sigma a_i,\ \ \ b_{i+1} = \rho b_i ,
\end{equation}
for some real numbers $\sigma,\rho \in \R^*$. Such a fractal recursive diagram will be denoted by $(\mathscr{F}_n^{\sigma ,\rho} (a_1 ,b_1 )_{n\geq 1}$ where $\mathscr{F}_n^{\sigma ,\rho}$ corresponds to $D_n$ with the corresponding decoration satisfying \eqref{fracrel}. 
\end{definition}

We can simplify our notations for fractal diagrams. Let us denote for all $n\geq 1$ by $F_n^{\sigma ,\rho} (a_1 ,b_1)$ the decorated forest 
\begin{equation}
F_n^{\sigma ,\rho} (a_1,b_1)=F(a_1,b_1, \sigma a_1 ,\rho b_1 ,\dots ,\sigma ^{n-1} a_1 ,\rho^{n-1} b_1 ) .    
\end{equation}

We then have the following Lemma:

\begin{lemma}
\label{asymadmittance}
Let $(\mathscr{F}_n^{\sigma ,\rho} )_{n\geq 1}$ be a fractal diagram. For all $n\geq 1$, we have 
\begin{equation}
\label{recurfractal}
Y(F_n^{\sigma ,\rho} (a_1 ,b_1) )=[Y(b_1 ),Y(a_1)^{-1},Y(F_{n-1}^{\sigma,\rho} (\sigma a_1 ,\rho b_1 )) ] .
\end{equation}
\end{lemma}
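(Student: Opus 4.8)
The plan is to deduce the stated recursion directly from the general continued-fraction relation \eqref{recuradmi2}, specialised to the fractal decorations, followed by a short index-matching step. The key observation is that the scaling relations \eqref{fracrel} are \emph{self-similar} under shifting the index by one: peeling off the first pair $(a_1,b_1)$ from the fractal forest leaves a forest which is again fractal of the same type, with one fewer level and with the initial decorations rescaled by $\sigma$ and $\rho$. No analytic input is required; the whole statement is a purely algebraic consequence of \eqref{recuradmi2} together with the structure built into \eqref{fracrel}.

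First I would recall that, by the very definition of the fractal forest,
\[
F_n^{\sigma,\rho}(a_1,b_1) = F(a_1,b_1,\sigma a_1,\rho b_1,\dots ,\sigma^{n-1}a_1,\rho^{n-1}b_1).
\]
Applying the one-step recursion \eqref{recuradmi2} with first pair of decorations equal to $(a_1,b_1)$ yields immediately
\[
Y(F_n^{\sigma,\rho}(a_1,b_1)) = \bigl[Y(b_1),Y(a_1)^{-1},Y(F(\sigma a_1,\rho b_1,\sigma^2 a_1,\rho^2 b_1,\dots ,\sigma^{n-1}a_1,\rho^{n-1}b_1))\bigr].
\]
The next step is to identify the inner forest on the right-hand side with $F_{n-1}^{\sigma,\rho}(\sigma a_1,\rho b_1)$. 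To do so I would simply write out the definition of $F_{n-1}^{\sigma,\rho}$ with initial decorations $(\sigma a_1,\rho b_1)$: substituting $a_1\mapsto \sigma a_1$ and $b_1\mapsto \rho b_1$ in the defining formula produces the decorations $\sigma^{k}(\sigma a_1)=\sigma^{k+1}a_1$ and $\rho^{k}(\rho b_1)=\rho^{k+1}b_1$ for $k=0,\dots ,n-2$, which is exactly the list of decorations appearing in the inner forest. Substituting this identification back gives precisely \eqref{recurfractal}.

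The argument is essentially bookkeeping, and the only point requiring a little care is the index shift: one must check that the $n-1$ levels of the inner forest carry precisely the decorations $\sigma^{j}a_1,\rho^{j}b_1$ for $j=1,\dots ,n-1$, and that these coincide with the $n-1$ levels of $F_{n-1}^{\sigma,\rho}(\sigma a_1,\rho b_1)$. I would not expect any genuine obstacle here; the verification reduces to matching two finite lists of decorations, and the conclusion follows directly from the already-established recursion \eqref{recuradmi2}.
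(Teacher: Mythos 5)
Your proof is correct and follows exactly the route the paper intends: the paper states Lemma \ref{asymadmittance} without a separate proof precisely because it is the specialisation of the general recursion \eqref{recuradmi2} to the fractal decorations, combined with the index-shift identification $F(\sigma a_1,\rho b_1,\dots ,\sigma^{n-1}a_1,\rho^{n-1}b_1)=F_{n-1}^{\sigma,\rho}(\sigma a_1,\rho b_1)$ that you verify. Your explicit bookkeeping of the rescaled decoration lists is the only content needed, and it matches the paper's implicit argument.
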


We now denote by $Y^{\sigma,\rho} _{a_1,b_1}$ the admittance of the asymptotic fractal diagram. Equation \eqref{recurfractal} then induces the following relation:

\begin{equation}
\label{asympfractal}
Y^{\sigma,\rho}_{a_1,b_1} =[Y(b_1 ),Y(a_1)^{-1},Y^{\sigma ,\rho}_{\sigma a_1,\rho b_1} ] .
\end{equation}

This relation is difficult to handle but in many application of fractal diagram a more stringent condition is made relating the two intrinsic parameters $\sigma$ and $\rho$ of the diagram. We detail this situation in the next Section.

\subsection{Admittance of Oustaloup and R-L fractal networks}

We first introduce the definition of Oustaloup fractal diagrams:

\begin{definition}
A fractal diagram $(\mathscr{F}_n^{\sigma ,\rho} (a_1,b_1))_{n\geq 1}$ is called a Oustaloup fractal diagram if the following relation between the fractal parameters holds
\begin{equation}
\rho =\sigma^{-1} .
\end{equation}
\end{definition}

Of special importance are Oustaloup fractal diagram where $a_1$ and $b_1$ are respectively inertance and resistance, then the Oustaloup fractal diagram corresponds to the fractal R-L ladder network. Denoting by $Y^{\sigma}_{a_1,b_1 }$ the function 
\begin{equation}
Y^{\sigma}_{a_1,b_1} = Y^{\sigma ,\sigma^{-1}}_{a_1 ,b_1} ,    
\end{equation}
we obtain the following Theorem:

\begin{theorem}
The asymptotic impedance $Y^{\sigma}_{a_1 ,b_1}$ of a fractal R-L ladder network with parameters $\sigma$, $a_1$ as inertance, $b_1$ as resistance, satisfies
\begin{equation}
\label{functionalrelation}
Y^{\sigma}_{a_1,b_1} (\omega ) =[Y(b_1),Y(a_1)^{-1} , \sigma Y^{\sigma}_{a_1,b_1} (\sigma^2 \omega ) ].
\end{equation}
\end{theorem}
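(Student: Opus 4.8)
The plan is to derive \eqref{functionalrelation} from the asymptotic relation \eqref{asympfractal}, specialised to the Oustaloup case $\rho=\sigma^{-1}$, namely
\begin{equation}
Y^{\sigma}_{a_1,b_1}(\omega)=[Y(b_1),Y(a_1)^{-1},Y^{\sigma}_{\sigma a_1,\sigma^{-1}b_1}(\omega)],
\end{equation}
and to reduce everything to the single scaling identity
\begin{equation}
Y^{\sigma}_{\sigma a_1,\sigma^{-1}b_1}(\omega)=\sigma\,Y^{\sigma}_{a_1,b_1}(\sigma^2\omega).
\end{equation}
Substituting the second relation into the first gives \eqref{functionalrelation} at once, so the entire content of the theorem is contained in this scaling identity.

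To prove it, I would first record an elementary scaling lemma for continued fractions: for every nonzero scalar $\lambda$ and every finite sequence,
\begin{equation}
\lambda\,[\alpha_1,\alpha_2,\alpha_3,\dots,\alpha_{2m}]=[\lambda\alpha_1,\lambda^{-1}\alpha_2,\lambda\alpha_3,\dots,\lambda^{-1}\alpha_{2m}],
\end{equation}
which follows by a one-line induction on the length from $[\alpha_1,\dots]=\alpha_1+1/[\alpha_2,\dots]$ together with $\lambda/x=1/(\lambda^{-1}x)$. The crucial feature is that the alternation $\lambda,\lambda^{-1}$ matches exactly the alternating structure of \eqref{recursiveformula}.

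Next I would work at finite level. Using \eqref{recursiveformula}, the explicit R-L single admittances $Y(R)=1/R$ and $Y(L)(\omega)=1/(jL\omega)$, and the Oustaloup scaling (so that the $i$-th inertance is $\sigma^{i-1}a_1$ and the $i$-th resistance is $\sigma^{-(i-1)}b_1$), the $i$-th odd and even entries of $Y(F_{n-1}^{\sigma,\sigma^{-1}}(\sigma a_1,\sigma^{-1}b_1))(\omega)$ compute to $\sigma^i/b_1$ and $j\sigma^i a_1\omega$, while those of $Y(F_{n-1}^{\sigma,\sigma^{-1}}(a_1,b_1))(\sigma^2\omega)$ compute to $\sigma^{i-1}/b_1$ and $j\sigma^{i+1}a_1\omega$. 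Thus the odd entries differ by a factor $\sigma$ and the even entries by a factor $\sigma^{-1}$, precisely the pattern of the scaling lemma; applying it yields the finite-level identity $Y(F_{n-1}^{\sigma,\sigma^{-1}}(\sigma a_1,\sigma^{-1}b_1))(\omega)=\sigma\,Y(F_{n-1}^{\sigma,\sigma^{-1}}(a_1,b_1))(\sigma^2\omega)$, with \eqref{recurfractal} used to identify the truncations. Letting $n\to\infty$, and using that $Y^{\sigma}_{a_1,b_1}$ is by definition the limit of these truncations, gives the scaling identity and hence the theorem.

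The step requiring the most care is the alternating bookkeeping: one must check that resistances sit in the odd slots of \eqref{recursiveformula} and inverted inertances in the even slots, and that under $(a_1,b_1,\omega)\mapsto(\sigma a_1,\sigma^{-1}b_1,\omega)$ compared with $(a_1,b_1,\sigma^2\omega)$ the odd entries acquire $\sigma$ whereas the even entries acquire $\sigma^{-1}$. It is this matching, rather than any homogeneity of the admittance functional itself (which fails because of the inversions in \eqref{algrules}), that makes the global factor collapse to the single $\sigma$ appearing in \eqref{functionalrelation}. The only analytic ingredient is the existence of the limit defining the asymptotic admittance, which I take as given by the definition of the asymptotic fractal diagram.
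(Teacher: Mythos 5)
Your proposal is correct and follows essentially the same route as the paper: you reduce the theorem to the scaling identity $Y^{\sigma}_{\sigma a_1,\sigma^{-1}b_1}(\omega)=\sigma\,Y^{\sigma}_{a_1,b_1}(\sigma^2\omega)$ (the paper's Lemma \ref{lemmeasympfrac}) combined with the recursive relation \eqref{asympfractal}, and you prove that identity via the alternating continued-fraction scaling lemma $[\lambda\alpha_1,\lambda^{-1}\alpha_2,\dots]=\lambda[\alpha_1,\alpha_2,\dots]$, which is exactly the paper's Lemma \ref{continuedfraction}, applied to the same entry-by-entry bookkeeping of resistances in odd slots and inverted inertances in even slots. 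The only (minor, and arguably beneficial) difference is that you carry out the argument on finite truncations and pass to the limit explicitly, whereas the paper manipulates the infinite continued fractions directly.
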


The proof is a direct consequence of Lemma \ref{asymadmittance} and the following Lemma:

\begin{lemma}
\label{lemmeasympfrac}
The asymptotic impedance $Y^{\sigma}_{a_1 ,b_1}$ of a fractal R-L ladder network with parameters $\sigma$, $a_1$ as inertance, $b_1$ as resistance, satisfies
\begin{equation}
Y^{\sigma}_{\sigma a_1,\sigma^{-1} b_1} (\omega ) =\sigma Y^{\sigma}_{a_1,b_1} (\sigma^2 \omega ) .
\end{equation}
\end{lemma}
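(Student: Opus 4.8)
The plan is to reduce the statement to an elementary scaling identity for continued fractions and then transport it through the explicit form of the single admittances. First I would record the scaling identity: for any nonzero $\lambda$ and any finite sequence $\alpha_1,\dots,\alpha_m$,
\[
\lambda\,[\alpha_1,\alpha_2,\alpha_3,\dots,\alpha_m]=[\lambda\alpha_1,\lambda^{-1}\alpha_2,\lambda\alpha_3,\dots,\lambda^{(-1)^{m-1}}\alpha_m],
\]
that is, multiplying a continued fraction by $\lambda$ amounts to multiplying the odd-indexed entries by $\lambda$ and the even-indexed entries by $\lambda^{-1}$. This I would prove by induction on $m$ using $[\alpha_1,\dots,\alpha_m]=\alpha_1+1/[\alpha_2,\dots,\alpha_m]$: one pulls the factor inside as $\lambda/[\alpha_2,\dots,\alpha_m]=1/\bigl(\lambda^{-1}[\alpha_2,\dots,\alpha_m]\bigr)$ and applies the inductive hypothesis with scaling factor $\lambda^{-1}$ to the shorter fraction, which flips the roles of odd and even positions exactly as required.

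Second I would make the single admittances explicit. For the network $(a_1,b_1)$ with inertance $a_1$ and resistance $b_1$, using $Y(R)=1/R$ and $Y(L)(\omega)=1/(jL\omega)$ together with the fractal rule $a_i=\sigma^{i-1}a_1$, $b_i=\sigma^{-(i-1)}b_1$, one has $Y(b_i)(\omega)=\sigma^{i-1}/b_1$ and $Y(a_i)^{-1}(\omega)=j\sigma^{i-1}a_1\omega$. Substituting $\sigma^2\omega$ for $\omega$ changes only the inertance entries, giving $Y(a_i)^{-1}(\sigma^2\omega)=j\sigma^{i+1}a_1\omega$. For the shifted network $(\sigma a_1,\sigma^{-1}b_1)$ the $i$-th inertance is $\sigma^i a_1$ and the $i$-th resistance is $\sigma^{-i}b_1$, so its entries are $\sigma^i/b_1$ and $j\sigma^i a_1\omega$. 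By the recursive formula \eqref{recursiveformula}, each $Y(F_n^{\sigma}(\cdot,\cdot))$ is precisely the continued fraction built from these entries in the order $Y(b_1),Y(a_1)^{-1},Y(b_2),\dots$, so the odd positions carry the resistance entries and the even positions the inertance entries.

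The heart of the argument is then to compare, at each finite level $n$, the continued fraction for $\sigma\,Y(F_n^{\sigma}(a_1,b_1))(\sigma^2\omega)$ with that for $Y(F_n^{\sigma}(\sigma a_1,\sigma^{-1}b_1))(\omega)$. Applying the scaling identity with $\lambda=\sigma$ to the first, the odd entries $\sigma^{i-1}/b_1$ become $\sigma^{i}/b_1$ and the even entries $j\sigma^{i+1}a_1\omega$ become $j\sigma^{i}a_1\omega$; these are exactly the entries of the second continued fraction. Hence the finite-level identity $Y(F_n^{\sigma}(\sigma a_1,\sigma^{-1}b_1))(\omega)=\sigma\,Y(F_n^{\sigma}(a_1,b_1))(\sigma^2\omega)$ holds for every $n\geq 1$. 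Finally I would pass to the limit $n\to\infty$: since $Y^{\sigma}_{a_1,b_1}$ is by definition the limit of the $Y(F_n^{\sigma})$, and both multiplication by $\sigma$ and the substitution $\omega\mapsto\sigma^2\omega$ are continuous, the desired relation $Y^{\sigma}_{\sigma a_1,\sigma^{-1}b_1}(\omega)=\sigma\,Y^{\sigma}_{a_1,b_1}(\sigma^2\omega)$ follows.

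I expect the main obstacle to be the careful bookkeeping of the powers of $\sigma$ and the correct alignment of odd- versus even-indexed positions when the scaling identity is applied; a secondary point is justifying the interchange of the scaling with the limit $n\to\infty$, which reduces to the convergence of the continued fraction defining the asymptotic admittance $Y^{\sigma}_{a_1,b_1}$.
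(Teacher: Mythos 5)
Your proposal is correct and takes essentially the same route as the paper: its key ingredient is exactly the paper's Lemma \ref{continuedfraction} (the continued-fraction scaling identity, which the paper proves by induction on pairs of entries with fixed factor $\alpha$, while you induct one entry at a time with the factor alternating between $\lambda$ and $\lambda^{-1}$ --- both valid), combined with the explicit scaling of the single admittances and the recursive formula \eqref{recursiveformula}. If anything, your formulation is slightly more careful than the paper's: you establish the identity at each finite level $n$ and then pass to the limit, flagging the convergence issue, whereas the paper applies the scaling identity formally to the infinite continued fraction without addressing convergence at all.
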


The proof is given in Section \ref{prooflemma}.

\section{Fractional behavior of infinite fractal R-L ladder networks}
\label{sectionfractionalbehavior}

The functional equation \eqref{functionalrelation} is difficult to solve. Following classical approaches, we make the two following hypothesis:

\begin{equation}
\label{asymptoticassumptions}
\left .
\begin{array}{l}
\sigma Y(a_1)^{-1} (\omega ) Y^{\sigma}_{a_1 ,b_1} (\sigma^2 \omega ) 
\underset{\omega \rightarrow 0}{\longrightarrow}
 0 ,\\
Y^{\sigma}_{a_1,b_1} (\sigma^2 \omega ) 
\underset{\omega \rightarrow 0}{\longrightarrow}
 +\infty .
\end{array}
\right .
\end{equation}

Under these assumptions, the asymptotic behavior of $Y^{\sigma}_{a_1 ,b_1} (\omega )$ is related to the simplified functional relation: 

\begin{equation}
\label{scaleinvariance}
Y (\omega  ) =\sigma 
Y (\sigma^2 \omega ) ,
\end{equation}
meaning that $Y$ is a scale invariant function.\\

We then look if fractional or power law behaviors are possible for functions satisfying the scale invariance relation \eqref{scaleinvariance}. Precisely, can we find solutions of \eqref{scaleinvariance} in the class
\begin{equation}
\label{fractionalbehavior}
Y (\omega ) =K \omega^{\gamma} ,
\end{equation}
for some real constants $K$ and $\gamma$ ? \\

Replacing directly $Y$ by a function \eqref{fractionalbehavior} in \eqref{scaleinvariance}, we obtain 
\begin{equation}
K \omega ^{\gamma} =\sigma K \sigma^{2\gamma} \omega^{\gamma}  ,
\end{equation}
so that $\gamma$ have to satisfy
\begin{equation}
\sigma^{2\gamma +1} =1 ,\ \forall\ \omega\not= 0 .
\end{equation}

We then deduce the following Theorem:

\begin{theorem}
Fractional behaviors of the form \eqref{fractionalbehavior} satisfying the scale invariance functional \eqref{scaleinvariance} for fractal R-L ladder networks exist if and only if $\gamma =-1/2$.
\end{theorem}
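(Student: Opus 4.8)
The plan is to leverage the direct substitution already carried out just before the statement, which reduces the theorem to an elementary analysis of the scalar identity governing the exponent. Inserting the candidate $Y(\omega)=K\omega^{\gamma}$ into the scale invariance relation \eqref{scaleinvariance} and cancelling the common factor $K\omega^{\gamma}$ (legitimate for $K\neq 0$ and $\omega\neq 0$) yields the constraint
\begin{equation}
\sigma^{2\gamma+1}=1 .
\end{equation}
The whole argument then amounts to determining for which exponents $\gamma$ this identity can hold, and the two implications of the equivalence fall out separately.

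First I would treat sufficiency. If $\gamma=-1/2$, then $2\gamma+1=0$, so $\sigma^{2\gamma+1}=\sigma^{0}=1$ irrespective of the value of $\sigma$, and the relation \eqref{scaleinvariance} is satisfied by $Y(\omega)=K\omega^{-1/2}$ for every constant $K$. This exhibits a genuine power-law solution of the form \eqref{fractionalbehavior}, establishing the ``if'' direction.

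For necessity, I would start from a putative fractional solution $Y(\omega)=K\omega^{\gamma}$ with $K\neq 0$ and use the reduced equation $\sigma^{2\gamma+1}=1$. Since $\sigma$ is a genuine scaling parameter of the fractal network, I take $\sigma>0$ and $\sigma\neq 1$, so that $\ln\sigma\neq 0$. Applying the logarithm gives $(2\gamma+1)\ln\sigma=0$, and dividing by $\ln\sigma$ forces $2\gamma+1=0$, that is $\gamma=-1/2$. This closes the equivalence.

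The only delicate point, which I expect to flag rather than to compute around, is the hypothesis on $\sigma$: the equivalence genuinely requires $\sigma\neq 1$, since for $\sigma=1$ the identity $\sigma^{2\gamma+1}=1$ holds for every $\gamma$ and the exponent is no longer pinned down. One must likewise discard the trivial solution $K=0$ (which gives $Y\equiv 0$ and is not a fractional behavior in the sense of \eqref{fractionalbehavior}); and if $\sigma$ were permitted to be negative, one should restrict to the positive scaling regime so that $\sigma^{2\gamma+1}$ is well defined for the non-integer exponent $2\gamma+1$. Beyond these caveats the proof is purely a one-line manipulation of the constraint already derived.
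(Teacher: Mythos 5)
Your proposal is correct and follows essentially the same route as the paper: direct substitution of $Y(\omega)=K\omega^{\gamma}$ into \eqref{scaleinvariance}, cancellation of $K\omega^{\gamma}$, and deduction of the constraint $\sigma^{2\gamma+1}=1$, which forces $\gamma=-1/2$. Your explicit separation of the two implications and the caveats you flag ($K\neq 0$, $\sigma>0$, $\sigma\neq 1$) are worthwhile additions of rigor that the paper leaves implicit, but they do not constitute a different argument.
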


As a consequence, we are waiting for asymptotic behavior of the form 
\begin{equation}
Y^{\sigma}_{a_1,b_1} (\omega ) \underset{\omega \rightarrow 0}{\longrightarrow} \di\frac{K}{\sqrt{\omega}} .
\end{equation}

It must be noted that this result is coherent with our asymptotic assumptions 
\eqref{asymptoticassumptions}. \\

This kind of behavior suggest a possible connection between fractal R-L ladder networks and the diffusion equation. This fact is well known in the engineering community and we give details in the next Section.

\section{Fractal R-L ladder networks and the diffusion equation}
\label{sectiondiffusion}

The aim of this Section is to give a precise presentation of computations found for example in Oustaloup \cite{oustaloup}, \cite{riu} or more recently in \cite{sabatier} and \cite{clara}. These computations are not easy to handle and we hope that this section will allow more people to deal with.

\subsection{The continuous representation problem for R-L Ladder networks}

Let us consider a R-L ladder network described by the family of inertance and resistance $(L_n ,R_n)_{n\geq 1}$. For each unit of this ladder we have relations between the quantities $U_n$ and $I_n$ corresponding to the voltage and current: for each $n\geq 1$, we have 
\begin{equation}
\label{relationscircuit}
\left .
\begin{array}{lll}
U_n -U_{n-1} & = & -jL_{n-1} \omega I_n ,\\
I_{n+1} -I_n & = & -\di\frac{1}{R_n} U_n .
\end{array}
\right .
\end{equation}
\begin{center}
\begin{figure}[!ht]
    \resizebox{1.5\linewidth}{!}{	\begin{tikzpicture}
		[-,auto,thin,
			boxR/.style={rectangle,text width=3em,fill=blue!5,draw,align=center},
			boxL/.style={rectangle,text width=3em,fill=orange!5,draw,align=center},
			boxS/.style={rectangle,text width=3em,draw,align=center},
			connection/.style={inner sep=0,outer sep=0},
			mid arrow/.style={draw, postaction={decorate},
				decoration={
					markings, mark=at position 0.6 with {\arrow[scale=2]{>}}}},
			begin arrow/.style={draw, postaction={decorate},
				decoration={
					markings, mark=at position 0.1 with {\arrow[scale=2]{>}}}},
			begin arrow1/.style={draw, postaction={decorate},
				decoration={
					markings, mark=at position 0.3 with {\arrow[scale=2]{>}}}},
			end arrow/.style={draw, postaction={decorate},
				decoration={
					markings, mark=at position 1 with {\arrow[scale=2]{>}}}}	
			]

		\node[] (empty0) {};
		\node[connection] (I0) [right of = empty0,node distance = 1.2cm] {};
		\node[connection] (1I0) [above of = I0, node distance = 1.8cm] {};
		\node[connection] (2I0) [below of = I0, node distance = 1.8cm]{};
		
		\node[boxL] (L0) [right of = 1I0,node distance = 1.8cm] {$L_0$};
		\node[boxR] (R0) [right of = 2I0, node distance = 1.8cm]{$R_0$};
		
		\node[connection] (I1) [right of = L0,node distance=1.7cm]{};
		\node[connection] (dots) [right of = I1,node distance = 0.5cm]{$\cdots$};
		\node[connection] (I1-In) [above right of = dots,node distance=1cm]{};
		\node[connection] (In) [right of = I1-In,node distance=1.2cm]{};
		
		\node[connection] (1In) [above of = In, node distance = 1.3cm]{};
		\node[connection] (2In) [below of = In, node distance = 1.3cm]{};
		
		\node[boxL] (Ln) [right of = 1In,node distance = 1.8cm] {$L_n$};
		\node[boxR] (Rn) [right of = 2In,node distance = 1.8cm] {$R_n$};
		
		\node[connection] (In+1) [right of = Ln, node distance = 2cm]{};		
		\node[connection] (1In+1) [above of = In+1, node distance = 0.5cm]{};
		\node[connection] (2In+1) [below of = In+1, node distance = 0.5cm]{};
		\node[connection] (3In+1) [right of = 1In+1, node distance = 0.3cm]{};
		\node[connection] (4In+1) [right of = 2In+1, node distance = 0.3cm]{};
		\node[connection] (dots1) [right of = In+1,node distance = 0.8cm]{$\cdots$};
		
		\node[connection] (5In+1) [right of = 3In+1, node distance = 1cm]{};
		\node[connection] (6In+1) [right of = 4In+1, node distance = 1cm]{};
		\node[connection] (7In+1) [right of = 5In+1, node distance = 0.3cm]{};
		\node[connection] (8In+1) [right of = 6In+1, node distance = 0.3cm]{};
		
		\node[connection] (9In+1) [below of = 7In+1, node distance = 0.5cm]{};
		\node[connection] (10In+1) [right of = 9In+1, node distance = 0.5cm]{};
		
		\node[connection] (0Rn) [below of = 10In+1, node distance = 1.3cm]{};
		\node[connection] (1Rn) [below of = 10In+1, node distance = 2.6cm]{};
		\node[connection] (2Rn) [right of = 0Rn, node distance = 1cm]{};
		
		\node[connection] (dots2) [below right of = 2Rn,node distance = 1cm]{$\cdots$};
		\node[connection] (0I1) [right of = dots2,node distance = 0.5cm]{};
		\node[connection] (1I1) [right of = 0I1,node distance = 0.8cm]{};
		\node[connection] (2I1) [below of = 1I1,node distance = 3.6cm]{};
		\node[connection] (3I1) [below of = 1I1,node distance = 1.8cm]{};
		\node[connection] (4I1) [right of = 3I1,node distance = 1cm]{};
		
		
		\node[connection] (S1) [above of = I1-In,node distance = 2.5cm] {};
		\node[connection] (S2) [below of = I1-In,node distance = 3cm] {};
		\node[connection] (S3) [right of = S1,node distance = 8.1cm] {};
		\node[connection] (S4) [right of = S2,node distance = 8.1cm] {};

		\node[connection] (U01) [below of = 2I1,node distance = 0.8cm] {};
		\node[connection] (U02) [below of = 2I0,node distance = 0.8cm] {};
		
		\node[connection] (U11) [below left of = I1,node distance = 1.3cm] {};
		\node[connection] (U12) [left of = U11,node distance = 1.5cm] {};
		
		\node[connection] (Un1) [below of = 1Rn,node distance = 0.8cm] {};
		\node[connection] (Un2) [below of = 2In,node distance = 0.8cm] {};
		
		\node[connection] (1Un+1) [below of = 2In+1,node distance = 0.3cm] {};
		\node[connection] (2Un+1) [below of = 8In+1,node distance = 0.3cm] {};
		\node[connection] (3Un+1) [left of = 1Un+1,node distance = 1.3cm] {};
		\node[connection] (4Un+1) [left of = 3Un+1,node distance = 1.4cm] {};

					
		\draw[mid arrow,>=stealth] (empty0) -- (I0) node[above=1mm,text width=4em] {$I_{0}$};
		\draw (1I0) -- (2I0) {};		
		\draw[mid arrow,>=stealth] (1I0) -- (L0) {};
		\draw[mid arrow,>=stealth] (L0) -- (I1) node[above=1mm,text width=3em] {$I_1$};
		\draw[mid arrow,>=stealth] (I1-In) -- (In) node[above=1mm,text width=4em] {$I_n$};
			
		\draw (1In) -- (2In) {};	
		\draw[mid arrow,>=stealth] (1In) -- (Ln) {};	
		\draw[mid arrow,>=stealth] (2In) -- (Rn) {};		
		\draw[mid arrow,>=stealth] (Ln) -- (In+1) node[above=1mm,text width=5em] {$I_{n+1}$};
		\draw (4In+1)--(2In+1)--(1In+1)--(3In+1);
		\draw (5In+1)--(7In+1)--(8In+1)--(6In+1);
		\draw (9In+1)--(10In+1)--(1Rn);
		\draw[begin arrow1,>=stealth] (Rn)--(1Rn) node[above=1mm,text width=15em] {$I_{n}-I_{n+1}$};
		\draw[mid arrow,>=stealth] (0Rn) -- (2Rn);
		\draw (0I1)--(1I1)--(2I1);
		\draw[begin arrow,>=stealth] (R0)--(2I1) node[above=1mm,text width=61em] {$I_{0}-I_1$};
		\draw[mid arrow,>=stealth] (2I0)--(R0);
		\draw[mid arrow,>=stealth] (3I1) -- (4I1);

		\draw[densely dashed,draw=blue] (S1) -- (S3) -- (S4) -- (S2) -- (S1);
		
		\draw[end arrow,>=stealth] (U01) -- (U02) node[below=1mm,text width = -38em]{$U_0$};
		
		\draw[end arrow,>=stealth] (U11) -- (U12) node[below=1mm,pos=0.4]{$U_0-U_{1}$};
		
		\draw[end arrow,>=stealth] (Un1) -- (Un2) node[below=1mm,text width = -15em]{$U_n$};
		
		\draw[end arrow,>=stealth] (2Un+1) -- (1Un+1) node[below=1mm,text width = -2em]{$U_{n+1}$};
		
		\draw[end arrow,>=stealth] (3Un+1) -- (4Un+1) node[below=1mm,pos=0.3]{$U_n-U_{n+1}$};
					
	\end{tikzpicture}}
\caption{Structure of R-L ladder networks} \label{str}
\end{figure}
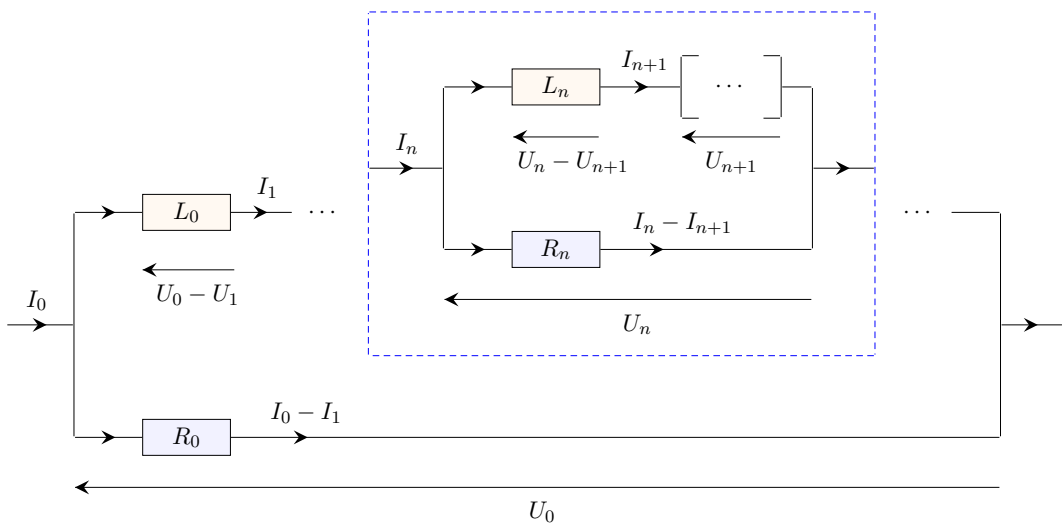
\end{center}

The main idea of the representation of R-L ladder networks by a partial differential equation (PDE), is to understand equations \eqref{relationscircuit} as coming from the discretization in the space variable $z$ of functions $U(z,\omega )$, $I(z,\omega )$, $R(z)$ and $L(z)$ over a particular discrete space-scale $\T =\{ z_n \}_{n\geq 1}$ to be determined. \\

The form of the discrete equations will be written as differences equations on the variable $z$ and this will induce a particular partial differential equation.\\

Precisely, we interpret \eqref{relationscircuit} as
\begin{equation}
\label{assumptioncontinuous}
U(z_n ,\omega )=U_n (\omega ), I(z_n ,\omega )=I_n (\omega ),\ R (z_n ) =R_n,\ L (z_n) =L_n . 
\end{equation}
Then, for each $n\geq 1$, we can rewrite \eqref{relationscircuit} as 
\begin{equation}
\label{systemini}
\left .
\begin{array}{lll}
\Delta_- [U] (z_n ,\omega ) & = & 
-j\left ( 
\di\frac{L(z_{n-1})}{z_n -z_{n-1}} 
\right ) 
\omega I(z_n ,\omega ),\\
\Delta_+ [I](z_n ,\omega ) & = & - 
\left ( 
\di\frac{1}{R(z_n ) (z_{n+1} -z_n)}
\right ) 
U(z_n ,\omega ) ,
\end{array}
\right .
\end{equation}
where $\Delta_+$, $\Delta_-$ are the classical forward and backward finite difference operators defined for all $f\in C (\T ,\R^d )$ by 
\begin{equation}
\Delta_+ [f](z_k)=\di\frac{f(z_{k+1})-f(z_k)}{z_{k+1}-z_k}\ \ \mbox{\rm and}\ \ 
\Delta_- [f](z_k)=\di\frac{f(z_k )-f(z_{k-1})}{z_k-z_{k-1}},
\end{equation}
respectively.\\

In order to recover the underlying PDE, we can write the second order equation
\begin{equation}
\label{finitePDE}
\Delta_- [-R(z) (\mu_{+}(z) -z) \Delta_+ [I]] (z_n ,\omega )  =  
-j\left ( 
\di\frac{L(\mu_{-} (z_n))}{z_n -\mu_{-} (z_n)}  
\right ) 
\omega I(z_n ,\omega )
\end{equation}
where $\mu_{+}(z_k) = z_{k+1}$, $\mu_{-}(z_k) = z_{k-1}$.

It is not easy to determine the continuous form of the finite difference equation \eqref{finitePDE}. \\

An idea is to simplify as far as possible the previous expression in such a way that the continuous PDE will be easy to identify. 

\subsection{Representation by a classical diffusion equation}

The easiest situation is obtained by imposing the following conditions called the {\bf diffusion conditions} as the associated continuous PDE will corresponds to the Fourier transform of the classical diffusion equation:\\

{\bf Diffusion conditions}: For all $z\in \T$, we have 
\begin{equation}
\label{diffusioncond}
R(z) (\mu_+ (z) -z) =R_0\ \ \mbox{\rm and}\ \ \ 
\di\frac{L(\mu_-(z))}{z -\mu_- (z)} = L_0 .
\end{equation}

If such a discrete space-scale $\T$ and functions $R$ and $L$ exist, then equation \eqref{finitePDE} reduces to 
\begin{equation}
\label{finitePDE2}
\Delta_- [R_0 \Delta_+ [I]] (z ,\omega )  =  j L_0 \omega I(z ,\omega ) ,
\end{equation}
for all $z\in \T^{\pm}$.\\

A continuous analogue is then given by 
\begin{equation}
\di\frac{\partial^2 I}{\partial z^2} = j\di\frac{L_0}{R_0} \omega I .
\end{equation}

However, one has to prove that a space-scale $\T$ and functions $R$ and $L$ solutions of the diffusion conditions can be indeed constructed. 

\subsection{Solution to the diffusion representation problem: the geometric case}

The determination of $R$ and $L$ from the conditions \eqref{diffusioncond} are in general difficult to handle. However, there exists a simple non trivial case called the {\bf geometric case}:\\

We assume that the space-scale $\T$ is such that for all $z\in \T^-$
\begin{equation}
\mu_+ (z) -z =\delta z ,
\end{equation}
where $\delta$ is a constant.\\ 

In order for $\T$ to be non trivial we must assume that $z_1 >0$ (or $<0$). Then, we obtain $z_2 =(1+\delta ) z_1$, $z_3 =(1+\delta )^2 z_1$, $\dots$. This corresponds to a {\bf geometric} distribution of the elements of $\T$.\\

Assuming that $\T$ is a geometric discrete space-scale, the functions $R$ and $L$ are then given by 
\begin{equation}
R(z)=\di\frac{R_0}{\delta z},\ \ \ L(z)=L_0 \delta z ,
\end{equation}
for all $z\in \T$.

As a consequence, we obtain the following Lemma:

\begin{lemma}[Diffusion conditions-geometric case]
\label{RLgeo}
Let $\T$ be a geometric space-scale with parameter $\delta \not=0$. A solution to the diffusion conditions \eqref{diffusioncond} is given by the functions $R$ and $L$ defined on $\R^*$ and $\R$ respectively defined by 
\begin{equation}
R(z)=\di\frac{R_0}{\delta z},\ \ \ L(z)=L_0 \delta z .
\end{equation}
\end{lemma}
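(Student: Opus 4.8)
The plan is to verify the two identities in the diffusion conditions \eqref{diffusioncond} by direct substitution of the proposed functions $R(z)=R_0/(\delta z)$ and $L(z)=L_0\delta z$; no equation needs to be solved, only checked. Since these functions are handed to us explicitly, the entire content of the lemma is that they satisfy \eqref{diffusioncond}, so the work is purely computational.

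First I would make the geometric hypothesis explicit at the level of the shift operators. By definition of a geometric space-scale, $\mu_+(z)-z=\delta z$ for every $z\in\T^-$, so that $\mu_+(z)=(1+\delta)z$ and $z_n=(1+\delta)^{n-1}z_1$; since $z_1\neq 0$, every node satisfies $z\neq 0$, which is exactly what is needed for $R$ to be well defined on $\T\subset\R^*$. Applying the forward relation $\mu_+(z_{k-1})=z_k$ at the node $\mu_-(z)$ gives $z=(1+\delta)\mu_-(z)$, hence
\[
\mu_-(z)=\frac{z}{1+\delta}\qquad\text{and}\qquad z-\mu_-(z)=\frac{\delta z}{1+\delta}.
\]

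With these two expressions in hand, I would verify each condition by a single substitution. For the resistance condition,
\[
R(z)\bigl(\mu_+(z)-z\bigr)=\frac{R_0}{\delta z}\cdot \delta z=R_0 ,
\]
and for the inertance condition, using $L(\mu_-(z))=L_0\delta\,\mu_-(z)=L_0\delta z/(1+\delta)$,
\[
\frac{L(\mu_-(z))}{z-\mu_-(z)}=\frac{L_0\delta z/(1+\delta)}{\delta z/(1+\delta)}=L_0 .
\]
Both identities hold for every $z\in\T$, which establishes the lemma.

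The only genuinely non-automatic step is the computation of $\mu_-(z)$. Whereas the forward increment $\mu_+(z)-z=\delta z$ is given to us directly by the geometric assumption, the second diffusion condition is phrased through the backward shift $\mu_-(z)$, so one must first invert the geometric recursion to write $\mu_-(z)=z/(1+\delta)$ before substituting. Once that inversion is performed, the cancellation of the common factor $\delta z/(1+\delta)$ makes the inertance identity fall out immediately, exactly mirroring the cancellation of $\delta z$ in the resistance identity; no further estimate or limiting argument is required.
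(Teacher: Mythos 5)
Your proof is correct and follows exactly the approach the paper intends: the paper gives no explicit proof of this lemma (it is asserted as an immediate consequence of the geometric assumption), and your direct substitution --- including the one genuinely non-automatic step of inverting the geometric recursion to get $\mu_-(z)=z/(1+\delta)$ for the inertance condition --- supplies precisely the omitted computation. The only caveat, shared with the paper, is that the degenerate value $\delta=-1$ must implicitly be excluded (otherwise the nodes collapse to $0$, where $R$ is undefined), which is covered by the paper's requirement that the space-scale $\mathbf{T}$ be non-trivial.
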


It must be noted that other possibilities are certainly possible. However, a global characterization of the space-scale $\T$ leading to an easy identification of the functions $R$ and $L$ satisfying the diffusion conditions is out of the scope of this article. 

\subsection{Diffusion conditions, geometric space-scale and fractal R-L ladder networks}

The previous results impose some constraints on the type of R-L ladder networks that one can represent by a diffusion equation. Indeed, taking the functions $R$ and $L$ as in Lemma \ref{RLgeo} and reminding that their evaluation on the geometric space-scale $\T$ with parameter $\delta$ gives the coefficients $(R_n ,L_n )_{n\geq 1}$ of the R-L ladder network, we deduce :

\begin{lemma}
\label{RLfuncfrac}
A recursive R-L ladder network $(L_n, R_n )_{n\geq 1}$ corresponding to the discretization of the functions $R(z)=\di\frac{R_0}{\delta z}$ and $L(z)=L_0 \delta z$ over a geometric space-scale $\T$ with parameter $\delta\not=0$ is a fractal R-L ladder network with parameter $1+\delta$.
\end{lemma}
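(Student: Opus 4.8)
The plan is to verify directly that the sampled coefficients $L_n = L(z_n)$ and $R_n = R(z_n)$ obey the defining scaling relations \eqref{fracrel} of a fractal diagram, and that the two resulting scaling factors are mutually inverse, so that the network falls into the Oustaloup class and is therefore a fractal R-L ladder network whose single parameter I can read off.

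First I would unwind the geometric space-scale hypothesis. By assumption $\mu_+ (z) - z = \delta z$ for all $z \in \T^-$, and since $\mu_+ (z_k) = z_{k+1}$ this reads $z_{k+1} - z_k = \delta z_k$, i.e.
\begin{equation}
z_{n+1} = (1+\delta) z_n ,\quad n\geq 1 .
\end{equation}
This single recurrence is the only structural input about $\T$ that the argument needs.

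Next I would substitute this recurrence into the explicit formulas $L(z) = L_0 \delta z$ and $R(z) = \di\frac{R_0}{\delta z}$ supplied by Lemma \ref{RLgeo}. A one-line computation gives
\begin{equation}
L_{n+1} = L_0 \delta z_{n+1} = (1+\delta) L_0 \delta z_n = (1+\delta) L_n ,
\end{equation}
and, since $z_{n+1}\neq 0$,
\begin{equation}
R_{n+1} = \di\frac{R_0}{\delta z_{n+1}} = \di\frac{1}{1+\delta}\,\di\frac{R_0}{\delta z_n} = (1+\delta)^{-1} R_n .
\end{equation}

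Finally, identifying the decorations as $a_i = L_i$ and $b_i = R_i$ (inertances and resistances, as fixed in the text), the two displayed relations are exactly \eqref{fracrel} with $\sigma = 1+\delta$ and $\rho = (1+\delta)^{-1}$. Because $\rho = \sigma^{-1}$, the network satisfies the Oustaloup condition with inertance $a_1$ and resistance $b_1$, hence is a fractal R-L ladder network, and its parameter is $\sigma = 1+\delta$. There is no genuine analytical obstacle here: the content of the statement is carried entirely by the two elementary ratios above. The only points requiring care are the bookkeeping that matches the circuit indexing of $(L_n,R_n)$ with the node indexing of $\T$, and the verification that the two scaling factors are reciprocal — it is precisely this reciprocity $\rho = \sigma^{-1}$, an artifact of the opposite $z$-dependence of $L(z)$ and $R(z)$, that pins the diagram into the single-parameter Oustaloup family rather than a generic two-parameter fractal diagram.
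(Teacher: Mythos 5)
Your proof is correct and follows essentially the same route as the paper's: substitute the geometric recurrence $z_{n+1}=(1+\delta)z_n$ into $L(z)=L_0\delta z$ and $R(z)=R_0/(\delta z)$ to get $L_{n+1}=(1+\delta)L_n$ and $R_{n+1}=(1+\delta)^{-1}R_n$, and identify these with the fractal scaling relations \eqref{fracrel}. Your explicit remark that $\rho=\sigma^{-1}$ places the network in the Oustaloup class is a small clarification the paper leaves implicit (its proof cites \eqref{fractalladdercondition}, which actually labels an equation inside the proof rather than the fractal condition, so your cleaner bookkeeping is welcome).
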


\begin{proof}
By assumption, we have 
\begin{equation}
\label{fractalladdercondition}
L (z_{n+1} ) =L_0 \delta  z_{n+1}=L_0 \delta (1+\delta) z_n =(1+\delta) L(z_n) ,
\end{equation}
and
\begin{equation}
R(z_{n+1} ) = \di\frac{R_0}{\delta z_{n+1}} =\di\frac{R_0}{\delta (1+\delta ) z_n } =(1+\delta )^{-1} R(z_n ) .
\end{equation}
As a consequence, the coefficients $(L_n ,R_n )_{n\geq 1}$ satisfy the relations
\begin{equation}
L_{n+1} =(1+\delta ) L_n,\ \ \ R_{n+1} =(1+\delta )^{-1} R_n ,
\end{equation}
which coincide with the conditions \eqref{fractalladdercondition} with parameter $1+\delta$. Then the family $(L_n ,R_n )_{n\geq 1}$ corresponds to a fractal R-L ladder network with parameter $1+\delta$. 
\end{proof}

We can now formulate the main result of this section:

\begin{theorem}
\label{main}
The behavior of a given current $I_0$ through a fractal R-L ladder network with inertance $L_0$, resistance $R_0$ and a scaling factor $1+\delta$ can be obtained by considering the discretization of the Fourier transform of the diffusion equation
\begin{equation}
\di\frac{\partial^2 \mathscr{I}}{\partial z^2} =\di\frac{L_0}{2\pi R_0} \di\frac{\partial \mathscr{I}}{\partial t} ,
\end{equation}
over a geometric space scale $\T$ with parameter $\delta$ given by 
\begin{equation}
\Delta_- [R_0 \Delta_+ [I]] (z ,\omega )  =  j L_0 \omega I(z ,\omega ) .
\end{equation}
\end{theorem}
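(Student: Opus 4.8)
The plan is to assemble the theorem from the structural results already established, reading the chain of implications in reverse and then closing the loop with the Fourier transform. First I would note that the two defining relations of an R-L ladder network, namely \eqref{relationscircuit}, are nothing but the discrete system \eqref{systemini} once the identifications \eqref{assumptioncontinuous} are made, and that eliminating $U$ between the two first-order difference equations yields the single second-order equation \eqref{finitePDE}. Thus the entire dynamical content of the given current $I_0$ through the ladder is already encoded in \eqref{finitePDE}, and the only freedom left is the choice of the space-scale $\T$ and of the functions $R(z)$, $L(z)$ interpolating the coefficients $(R_n,L_n)$.

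Next I would invoke the fractal hypothesis. Given a fractal R-L ladder network with scaling factor $1+\delta$, that is $L_{n+1}=(1+\delta)L_n$ and $R_{n+1}=(1+\delta)^{-1}R_n$, I would exhibit a geometric space-scale $\T$ with nodes $z_n=(1+\delta)^{n-1}z_1$ together with the functions $R(z)=R_0/(\delta z)$ and $L(z)=L_0\delta z$ whose evaluation at the nodes reproduces exactly the given coefficients. This is the converse reading of Lemma \ref{RLfuncfrac}, and it amounts only to fixing $z_1$, $R_0$, $L_0$ so as to match the initial data $L_1$, $R_1$. By Lemma \ref{RLgeo} these functions satisfy the diffusion conditions \eqref{diffusioncond}, so the second-order equation \eqref{finitePDE} collapses to its reduced form \eqref{finitePDE2}, which is precisely the finite-difference equation displayed in the statement.

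It then remains to identify the continuous object being discretized. Passing formally to the continuum, \eqref{finitePDE2} is the discretization of $\partial^2 I/\partial z^2 = j(L_0/R_0)\,\omega I$. I would recognize this as the image under the Fourier transform $\mathscr{F}$ in the time variable of the diffusion equation $\partial^2 \mathscr{I}/\partial z^2 = (L_0/2\pi R_0)\,\partial \mathscr{I}/\partial t$: with the convention $\mathscr{F}[f](\omega)=\int f(t)e^{-2\pi j\omega t}\,dt$ one has $\mathscr{F}[\partial_t \mathscr{I}](z,\omega) = 2\pi j\omega\, I(z,\omega)$, so that the factor $2\pi$ in the denominator of the diffusion coefficient cancels exactly and produces $j(L_0/R_0)\,\omega I$. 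This closes the identification and establishes that the ladder equations are the geometric discretization of the Fourier-transformed diffusion equation.

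The step I expect to be the main obstacle is the passage from the discrete equation \eqref{finitePDE2} to the continuous PDE: this is a heuristic \emph{continuous analogue} rather than a proved limit, so the honest content of the theorem is a correspondence statement (the fractal ladder relations coincide with the discretization of the Fourier-transformed diffusion equation on the geometric grid) rather than a convergence result. I would therefore state the theorem at the level of this discretization correspondence and flag explicitly that a genuine $\delta\to 0$ limit is not being claimed, since such a limit would require controlling the nonuniform geometric mesh and the singular $z$-dependence absorbed into $R(z)=R_0/(\delta z)$ and $L(z)=L_0\delta z$ as $z\to 0$.
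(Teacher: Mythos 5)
Your proof is correct and follows essentially the same route as the paper's: invoke Lemma \ref{RLfuncfrac} to realize the fractal ladder with scaling factor $1+\delta$ as the geometric-scale discretization of $R(z)=R_0/(\delta z)$ and $L(z)=L_0\delta z$, use the diffusion conditions (Lemma \ref{RLgeo}) to reduce to the finite-difference equation \eqref{finitePDE2}, and then undo the Fourier transform with the $2\pi$ convention so that $\mathscr{F}[\partial_t\mathscr{I}]=2\pi j\omega I$ cancels the $2\pi$ in the diffusion coefficient. Your explicit remarks that the theorem uses the converse reading of Lemma \ref{RLfuncfrac} and that the discrete-to-continuous passage is a correspondence rather than a proved limit are slightly more careful than the paper's own wording, but they do not alter the argument.
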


This result explains in particular the asymptotic fractional behavior of fractal R-L ladder networks. 

\begin{proof}
Using Lemma \ref{RLfuncfrac}, the fractal R-L network with inertance $L_0$ and resistance $R_0$ with scaling factor $1+\delta$ can be recovered by discretization over the geometric space scale $\T$ of functions $R(z)=\di\frac{R_0}{\delta z}$ and $L(z)=L_0 \delta z$. These functions satisfy the diffusion conditions so that the behavior of a current $I_0$ though the corresponding fractal R-L ladder network can be seen as the discretization of the equation 
\begin{equation}
\di\frac{\partial^2 I}{\partial z^2} = j\di\frac{L_0}{R_0} \omega I ,
\end{equation}
for $z\in \T$ and $I_n =I(z_n)$, $z_n \in \T$. \\

Denoting by $\mathscr{F}[g]$ the Fourier transform of a function $g$ defined by 
\begin{equation}
\mathscr{F}[g] (\omega)=\di\int_{\R} \di e^{-2j\pi \omega s} g(s) ds ,
\end{equation}
and by $\mathscr{I}(z,t)$ the function defined by 
\begin{equation}
\mathscr{F} [\mathscr{I}(z,\cdot)](\omega )=I(z,\omega)
\end{equation}
and using the fact that
\begin{equation}
\mathscr{F} (f') (\omega )=2\pi j \omega \mathscr{F} (f) ,
\end{equation}
we obtain that $\mathscr{I}$ satisfies a diffusion equation with constant coefficient, precisely
\begin{equation}
\left .
\begin{array}{lll}
\di\frac{\partial^2 \mathscr{F}(\mathscr{I} (z,\cdot ))(\omega)}{\partial z^2} & = &  \di\frac{L_0}{R_0} j\omega \mathscr{F}(\mathscr{I} (z,\cdot ))(\omega) ,\\
\di\mathscr{F} \left ( \frac{\partial^2 \mathscr{I} (z,\cdot )}{\partial z^2} \right ) (\omega ) & = &  \mathscr{F}\left ( \di\frac{L_0}{2\pi R_0} \di\frac{\partial \mathscr{I} (z,\cdot )}{\partial t} (z,\cdot ) \right ) (\omega) ,
\end{array}
\right .
\end{equation}
which leads to
\begin{equation}
\di\frac{\partial^2 \mathscr{I}}{\partial z^2} =\di\frac{L_0}{2\pi R_0} \di\frac{\partial \mathscr{I}}{\partial t} .
\end{equation}
This concludes the proof.
\end{proof}

\section{Conclusion and perspectives}
\label{conclusion}

This article was designed to provide a mathematical introduction to classical work in engineering about the modeling of fractional behavior using special electronic devices and in particular fractal constructions called fractal R-L ladder networks. If these results were of importance in the engineering community it was related to the fact that such representations are able to provide modeling of a given phenomenon with less parameters than a classical approach meaning that such procedure give an explicit reduction of parameters for models (see for applications and results the work of D. Riu in \cite{riu}). Moreover, we wanted to introduce the connection between these fractal R-L ladder networks and the diffusion equation which was in fact at the beginning of this point of view. \\

The previous result suggest also, as already claimed by J. Sabatier and co-workers in \cite{sabatier} that the modeling of fractional behaviors can be done perhaps more efficiently looking for 
some general form of the diffusion equation, in particular with non-constant diffusion coefficients. This is done for example by J. Sabatier and al. in \cite{sabatier}. A mathematical treatment of these results with be given in a forthcoming article.

\section{Proof of Lemma \ref{lemmeasympfrac}}
\label{prooflemma}

We begin with a general property of continued fractions:

\begin{lemma}
\label{continuedfraction}
For all $\alpha \not=0$, we have 
\begin{equation}
[\alpha a_1,\alpha^{-1} a_2 ,\alpha a_3 ,\alpha^{-1}a_4 ,\dots ] =\alpha [a_1,a_2,\dots ] .
\end{equation}
\end{lemma}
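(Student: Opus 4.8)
The plan is to establish the identity first for finite continued fractions $[a_1,\dots,a_n]$ by induction on $n$, and then pass to the limit $n\to\infty$ for the infinite case used in the application. The essential difficulty is that peeling the leading term off the left-hand side shifts the parity of every remaining entry, so the prescribed scaling factor alternates between $\alpha$ and $\alpha^{-1}$ at each step; consequently a single induction on the stated identity will not close on itself. I would therefore strengthen the inductive hypothesis so as to control both alternating patterns simultaneously.

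Concretely, for a fixed $\alpha\neq 0$ I would introduce the two scaled fractions (each with $n$ entries)
$$A_n=[\alpha a_1,\alpha^{-1}a_2,\alpha a_3,\dots],\qquad B_n=[\alpha^{-1}a_1,\alpha a_2,\alpha^{-1}a_3,\dots],$$
both carrying the alternating pattern, with $A_n$ starting from the factor $\alpha$ on the first entry and $B_n$ from $\alpha^{-1}$. I would then prove by simultaneous induction on $n$ that $A_n=\alpha\,[a_1,\dots,a_n]$ and $B_n=\alpha^{-1}[a_1,\dots,a_n]$. The base case $n=1$ is immediate, since $[\alpha a_1]=\alpha a_1$ and $[\alpha^{-1}a_1]=\alpha^{-1}a_1$.

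For the inductive step I would use only the defining recursion $[\alpha_1,\dots,\alpha_n]=\alpha_1+1/[\alpha_2,\dots,\alpha_n]$. Peeling the leading term off $A_n$ leaves the tail $[\alpha^{-1}a_2,\alpha a_3,\dots]$, which is precisely a $B$-type fraction of $[a_2,\dots,a_n]$; by the inductive hypothesis it equals $\alpha^{-1}[a_2,\dots,a_n]$, so that
$$A_n=\alpha a_1+\frac{1}{\alpha^{-1}[a_2,\dots,a_n]}=\alpha\left(a_1+\frac{1}{[a_2,\dots,a_n]}\right)=\alpha\,[a_1,\dots,a_n].$$
The symmetric computation, in which the tail of $B_n$ is an $A$-type fraction equal by hypothesis to $\alpha\,[a_2,\dots,a_n]$, yields $B_n=\alpha^{-1}[a_1,\dots,a_n]$ and closes the induction. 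The stated lemma is exactly the $A$-case.

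Finally, for the infinite continued fraction occurring in the application to Lemma~\ref{lemmeasympfrac}, I would observe that every finite convergent satisfies the $A$-identity just proved; since multiplication by the fixed constant $\alpha$ is continuous, the identity passes to the limit whenever the convergents of $[a_1,a_2,\dots]$ converge, which holds in our setting where the entries are the network admittances. The only genuine subtlety here is the bookkeeping of the alternating parity, and the two-pattern induction disposes of it cleanly.
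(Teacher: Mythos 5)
Your proof is correct. The paper proves the same identity by induction on the peeling recursion $[a_1,a_2,\dots]=[a_1,[a_2,\dots]]$, but it resolves the parity problem by a different device: rather than strengthening the inductive hypothesis to cover both alternating patterns, it advances two entries at a time, taking the two-term identity $[\alpha a,\alpha^{-1}b]=\alpha\left(a+\frac{1}{b}\right)=\alpha[a,b]$ as base case and absorbing the interior parity flip by reapplying that same two-term identity with $\alpha$ replaced by $\alpha^{-1}$. The two bookkeeping schemes are equivalent in substance, but yours buys a little extra: it covers finite fractions of arbitrary (odd or even) length, whereas the paper's induction is phrased only for even length $2k+2$; and you make explicit the passage from finite convergents to the infinite continued fraction by continuity of multiplication by $\alpha$, a limiting step the paper leaves entirely implicit. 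Conversely, the paper's pair-stepping avoids carrying a second inductive statement, at the cost of invoking the base case twice (once for $\alpha$, once for $\alpha^{-1}$) inside each inductive step. One point worth stating precisely in your final write-up: the limit argument requires that the convergents of $[a_1,a_2,\dots]$ actually converge, and in the application the entries are frequency-dependent admittances, so convergence should be asserted pointwise in $\omega$ (or the identity should simply be read, as the paper implicitly does, as an identity of formal continued fractions satisfied by all convergents).
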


\begin{proof}
The proof is done by induction. We have 
\begin{equation}
\label{resn1}
[\alpha a_1, \alpha^{-1} a_2]= \alpha a_1 +\di\frac{1}{\alpha^{-1} a_2} =\alpha \left ( 
a_1 +\di\frac{1}{a_2} \right ) =\alpha [a_1 ,a_2] .
\end{equation}
Assume that 
\begin{equation}
\label{induc}
[\alpha a_1 ,\alpha^{-1}a_2 ,\dots ,\alpha a_{2k+1}, \alpha^{-1}a_{2k+2}] =
\alpha [a_1 ,\dots ,a_{2k+2}] ,
\end{equation}
if true for all $k=0,\dots ,n$. \\

As we have 
\begin{equation}
[a_1 ,a_2,\dots ]=[a_1,[a_2,\dots ]],
\end{equation}
we deduce that for $k=n+1$ we have 
\begin{equation}
\left .
\begin{array}{lll}
[\alpha a_1 ,\alpha^{-1}a_2 ,\dots ,\alpha a_{2n+3}, \alpha^{-1}a_{2n+4}] & = & 
[\alpha a_1 ,[\alpha^{-1}a_2 ,\dots ,\alpha a_{2n+3}, \alpha^{-1}a_{2n+4}]] ,\\
& = & [\alpha a_1 ,[\alpha^{-1}a_2 ,[\alpha a_3,\dots ,\alpha a_{2n+3}, \alpha^{-1}a_{2n+4}]]] .
\end{array}
\right .
\end{equation}
Using the induction hypothesis \eqref{induc} and then equality \eqref{resn1}, we obtain 
\begin{equation}
\left .
\begin{array}{lll}
[\alpha a_1 ,\alpha^{-1}a_2 ,\dots ,\alpha a_{2n+3}, \alpha^{-1}a_{2n+4}] & = & 
[\alpha a_1 ,[\alpha^{-1}a_2 ,\alpha [a_3 ,\dots , a_{2n+4}]]] ,\\
 & = & [\alpha a_1 ,\alpha^{-1} [a_2 , a_3 ,\dots , a_{2n+4}]] ,
\end{array}
\right .
\end{equation}
Finally, using again \eqref{resn1}, we obtain 
\begin{equation}
[\alpha a_1 ,\alpha^{-1}a_2 ,\dots ,\alpha a_{2n+3}, \alpha^{-1}a_{2n+4}] = 
\alpha [a_1 ,a_2 , a_3 ,\dots , a_{2n+4}] ,
\end{equation}
which concludes the proof.
\end{proof}
 
For a given inertance $a$, we have 
\begin{equation}
Y(a )(\omega )= \di\frac{1}{ja\omega} ,
\end{equation}
so that for all $\sigma\not=0$, we obtain
\begin{equation}
\label{scaleiner}
Y(\sigma a ) (\omega ) = \sigma Y(a) (\sigma^2 \omega ) .
\end{equation}
In the same way, for a given resistance $b$, as 
\begin{equation}
Y(b)(\omega ) =\di\frac{1}{b} ,
\end{equation}
we deduce that 
\begin{equation}
\label{scaleresis}
Y(\sigma^{-1} b ) (\omega ) = \sigma Y(b) (\omega ) .
\end{equation}
As the function is independent of $j\omega$, we can also write 
\begin{equation}
\label{scaleresis2}
Y(\sigma^{-1} b ) (\omega ) = \sigma Y(b) (\sigma^2 \omega ) .
\end{equation}

Using the previous results and formula \eqref{recursiveformula}, we obtain that 
$Y_{\sigma a_1 ,\sigma^{-1} b_1}^{\sigma}$ must satisfy
\begin{equation}
\left .
\begin{array}{lll}
Y_{\sigma a_1 ,\sigma^{-1} b_1}^{\sigma} (\omega ) & = & 
[Y(\sigma^{-1} b_1),Y(\sigma a_1)^{-1}, Y(\sigma^{-2} b_1),Y(\sigma^2 a_1 )^{-1},\dots ,Y(\sigma^{-n} b_1),Y(\sigma^n a_1)^{-1},\dots ](\omega )\\
 & = & [\sigma Y(b_1) ,\sigma^{-1} Y(a_1)^{-1} ,\dots , \sigma Y(\sigma^{n-1} b_1) ,\sigma^{-1} Y(\sigma^{n-1} a_1)^{-1},\dots ] (\sigma^2 \omega ).
 \end{array}
 \right .
 \end{equation}
Then we deduce from Lemma \ref{continuedfraction} that 
\begin{equation}
\left .
\begin{array}{llll}
Y_{\sigma a_1 ,\sigma^{-1} b_1}^{\sigma} (\omega ) & = & \sigma [Y(b_1) ,Y(a_1)^{-1} ,\dots , Y(\sigma^{n-1} b_1) , Y(\sigma^{n-1} a_1)^{-1},\dots ] (\sigma^2 \omega )\\
 & = & \sigma Y_{a_1,b_1}^{\sigma} (\sigma^2 \omega ) ,
\end{array}
\right .
\end{equation}
which concludes the proof of Lemma \ref{lemmeasympfrac}. 

\ack A. Szafra\'{n}ska thanks the National Science Center for the financial support, under the research project No. 2021/05/X/ST1/00332 and J. Cresson thanks the GDR CNRS no. 2043 Géométrie différentielle et Mécanique and the fédération MARGAUx (FR 2045) for supports. 
  
\bibliographystyle{acmurl}
\bibliography{bibmodel}

\begin{thebibliography}{1}

\bibitem{clara}
{\sc Ionescu, C.~M.}
\newblock {\em The human respiratory system - An analysis of the interplay
  between anatomy, structure, breathing and fractal dynamics}, vol.~101 of {\em
  BioEngineering}.
\newblock Springer, 2013.

\bibitem{kirchhoff}
{\sc Kirchoff, G.}
\newblock \"{U}ber die aufl\"osung der gleichungen, auf welche man bei der
  untersuchung der linearen verteilung galvanischer str\"ome gef\"uhrt wird.
\newblock {\em Annalen der Physik und Chemie 72\/} (1847), 497--508.

\bibitem{oustaloup}
{\sc Oustaloup, A.}
\newblock {\em Diversity and Non-integer Differentiation for System Dynamics}.
\newblock John Wiley \& Sons, Inc., 2014.

\bibitem{riu}
{\sc Riu, D.}
\newblock {\em Modélisation des courants induits dans les machines
  électriques par des systèmes d'ordre un demi}, 2001.
\newblock Available from: \url{https://hal.archives-ouvertes.fr/tel-00598516/}.

\bibitem{sabatier}
{\sc Sabatier, J.~Farges, C., and Tartaglione, V.}
\newblock {\em Fractional behaviours modelling - analysis and application of
  several unusual tools}, vol.~101 of {\em Intelligent Systems, Control and
  automation, Science and Engineering}.
\newblock Springer, 2022.

\end{thebibliography}

\end{document}